\documentclass[12pt]{article}
\usepackage[scale=0.8]{geometry}
\usepackage[utf8]{inputenc}
\usepackage{amsfonts}
\usepackage{amsmath}
\usepackage{bbm}
\usepackage{amssymb}
\usepackage{amsthm}
\usepackage{mathabx}
\usepackage{mathrsfs}
\usepackage{natbib}
\usepackage{xcolor}
\usepackage{hyperref}
\hypersetup{
    colorlinks=true,
    linkcolor=blue,
    filecolor=magenta,      
    urlcolor=cyan,
    citecolor=blue,
}
\usepackage{graphicx}
\usepackage{subcaption}
\usepackage[plain,noend]{algorithm2e}

\def\vec{\mathop{\rm vec}\nolimits}

\def\tr{\mathop{\rm tr}\nolimits}

\def\dist{\mathop{\rm dist}\nolimits}

\def\amp{\mathop{\:\:\,}\nolimits}
\def\Real{\mathop{\mathbb{R}}\nolimits}

\def\argmin{\mathop{\rm argmin}\nolimits}

\def\tr{\mathop{\rm tr}\nolimits}

\newcommand{\bu}{\boldsymbol{u}}
\newcommand{\bv}{\boldsymbol{v}}

\newcommand{\bx}{\boldsymbol{x}}
\newcommand{\by}{\boldsymbol{y}}

\newcommand{\bA}{\boldsymbol{A}}
\newcommand{\bB}{\boldsymbol{B}}
\newcommand{\bC}{\boldsymbol{C}}
\newcommand{\bD}{\boldsymbol{D}}

\newcommand{\bH}{\boldsymbol{H}}
\newcommand{\bI}{\boldsymbol{I}}

\newcommand{\bR}{\boldsymbol{R}}
\newcommand{\bS}{\boldsymbol{S}}

\newcommand{\bU}{\boldsymbol{U}}
\newcommand{\bV}{\boldsymbol{V}}

\newcommand{\bX}{\boldsymbol{X}}
\newcommand{\bY}{\boldsymbol{Y}}

\newcommand{\bsigma}{\boldsymbol{\sigma}}

\newcommand{\bTheta}{\boldsymbol{\Theta}}

\newcommand{\bSigma}{\boldsymbol{\Sigma}}

\newcommand{\bOmega}{\boldsymbol{\Omega}}
\newcommand{\Ell}{\mathcal{L}}
\newcommand{\Hada}{\circ} 

\usepackage{natbib}

\newtheorem{theorem}{Theorem}
\newtheorem{proposition}{Proposition}
\newtheorem{lemma}{Lemma}

\usepackage{authblk}

\title{A Proximal Distance Algorithm for Likelihood-Based Sparse Covariance Estimation}
 \author[1]{Jason Xu\thanks{Correspondence to:
 \href{mailto:jason.q.xu@duke.edu}{jason.q.xu@duke.edu}}}
 \author[2]{Kenneth Lange}
\affil[1]{Department of Statistical Science, Duke University}
\affil[2]{Department of Computational Medicine, Human Genetics, and Statistics, University of California, Los Angeles}

\date{\vspace{-22pt}}

\begin{document}

\maketitle
\begin{abstract}
This paper addresses the task of estimating a covariance matrix under a patternless sparsity assumption. In contrast to existing approaches based on thresholding or shrinkage penalties, we propose a likelihood-based method that regularizes the distance from the covariance estimate to a symmetric sparsity set. This formulation avoids unwanted shrinkage induced by more common norm penalties and enables optimization of the resulting non-convex objective by solving a sequence of smooth, unconstrained subproblems. These subproblems are generated and solved via the proximal distance version of the majorization-minimization principle. The resulting algorithm executes rapidly, gracefully handles settings where the number of parameters exceeds the number of cases, yields a positive definite solution, and enjoys desirable convergence properties. Empirically, we demonstrate that our approach outperforms competing methods by several metrics across a suite of simulated experiments. Its merits are illustrated on an international migration dataset and a classic case study on flow cytometry. Our findings suggest that the marginal and conditional dependency networks for the cell signalling data are more similar than previously concluded. 
\end{abstract}

\section{Introduction}

The task of estimating a covariance matrix from randomly sampled data is central in multivariate analysis. Unfortunately, estimation is complicated by several statistical and computational obstacles. \textcolor{black}{Chief among the latter is the quadratic growth of the number of free parameters in the number of features $p$. If $n$ is the number of cases, it is known statistically that the sample covariance estimator degrades as the ratio $p/n$ increases \citep{stein1956} and becomes singular as soon as $p>n$.} A more subtle difficulty lies in producing good estimators that maintain positive definiteness. Most approaches seek to mitigate the curse of dimensionality by imposing parsimony through assumptions on the size and structure of the effective parameters, a strategy that has proven successful in many applications. In this paper, we focus on the setting where the covariance matrix follows a patternless sparsity assumption. Here sparsity has an important interpretation; namely, zero entries in the covariance matrix encode marginal independence between features. 

Since the work of \cite{stein1956}, covariance estimation has remained an active area of research. Many regularized estimators have been proposed to achieve sparsity; \cite{pourahmadi2011},
\cite{chi2014stable}, and \cite{fan2016} provide excellent overviews. Some researchers assume a known ordering of variables. To their detriment, such estimators based on tapering, banding, or the Cholesky decomposition generally are sensitive to permutations of the features \citep{wu2003, bickel2008, levina2008, cai2010, bien2016}. When no natural ordering is available, a simple tactic involves \textit{thresholding} the sample covariance matrix by setting small estimated entries to zero \citep{karoui2008, bickel2008covariance, rothman2009, cai2011}. Though such elementwise operations straightforwardly induce sparsity, it is well-documented that the resulting estimator is not always positive definite in finite samples. Related Frobenius norm based approaches include an additional log-barrier term \citep{rothman2012} or appeal to alternating directions methods \citep{xue2012} to enforce positive definiteness. Similar methods have been developed for sparse correlation estimation \citep{cui2016sparse}. In general, great care must be taken in selecting thresholding constants to ensure positive definiteness. In many cases, the appropriate range is too narrow to induce an effective amount of sparsity \citep{azose2016}.

Penalized likelihood techniques offer an alternative to thresholding and are arguably the preferred method for estimating sparse precision (inverse covariance) matrices  \citep{yuan2007,molstad2018}. Sparsity carries a different interpretation here: zero entries in the precision matrix encode conditional rather than marginal independence. In this case, the negative Gaussian log-likelihood is convex, which not only ensures that minimizers are global optima, but enables fast algorithms such as the graphical lasso \citep{friedman2007} that make estimation easy under convex penalties such as an $\ell_1$-norm term. Lasso penalization also comes with disadvantages such as shrinkage toward the origin, which may lead to biased estimates and the inclusion of spurious predictors. 

Penalized likelihood estimation is decidedly more difficult in seeking a sparse covariance matrix. Because the negative log-likelihood in $\bSigma$ is no longer convex, significant computational difficulties arise. These challenges may explain in part the smaller literature \textcolor{black}{on} this task relative to precision estimation. \cite{lam2009} study the properties of $\ell_1$ penalized covariance estimation, and \cite{bien2011} propose a majorization-minimization algorithm that makes use of generalized gradient descent. Under the latter approach, convergence hinges on imposing a Lipschitz differentiability assumption that is realized by restricting the space to a subset of the positive definite cone. In practice, this restriction introduces an additional inner optimization subproblem, which
\textcolor{black}{is more cumbersome to implement} and may be numerically unstable even in moderate dimensions. Step-size selection can precipitate a delicate tradeoff between stability and practical rate of convergence. \cite{azose2016} build upon this work to propose a method for maximum a posteriori estimation that faces similar challenges. They report that cross-validation on a problem with $n=12$ and dimension $p \approx 200$ already becomes computationally impractical.

In the current paper, we revisit the penalized likelihood framework for sparse covariance estimation under a distance-to-set penalty in place of a norm penalty. In prior work, such distance penalties have proven effective in contexts such as generalized linear regression under both rank and sparsity constraints \citep{xu2017generalized}. 
Our penalization keeps parameter estimates close to the sparse constraint set while restricting estimates to the positive definite cone. \textcolor{black}{Neither additional assumptions on the structure of the covariance matrix nor prior knowledge of the location of zero entries is} imposed \citep{chaudhuri2007estimation}. Our method thus performs model selection while delivering a positive definite estimate of the covariance matrix, avoiding the systematic shrinkage engendered by convex norm penalties.  

Distance penalization also confers significant computational advantages. We develop a proximal distance algorithm that effectively solves our  non-convex optimization problem. Like \cite{bien2011}, we \textcolor{black}{employ} the majorization-minimization principle. Our algorithm enjoys a descent property as it converges to a stationary point of the objective, automatically selects a good step-size, and yields closed-form solutions to its subproblems. Our algorithm tends to converge quickly because the underlying surrogate functions tightly approximate the likelihood.  These advantages are illustrated by simulation studies and applications to real data on cell signaling and international migration.

\section{Background and penalized formulation}
Consider estimation of the covariance matrix $\bSigma$ given $n$ independent, identically distributed random vectors $\bX_1, \ldots, \bX_n \sim N_p({\bf 0}, \bSigma)$. Without loss of generality, we focus on the mean zero case and estimation of $\bSigma$ alone. In this scenario, the log-likelihood of the data is
\begin{eqnarray}
\Ell(\bSigma) &= & 
-\frac{n}{2} \ln \det \bSigma -\frac{n}{2} \text{tr} (\bSigma^{-1} \bS),  
\label{eq:ll}
\end{eqnarray}
where $\bS = \frac{1}{n} \sum_{i=1}^n \bX_i \bX_i^T$ denotes the sample covariance matrix. When the data are weakly dependent or non-Gaussian, estimation may still proceed on the basis of $\Ell(\bSigma)$, provided $\Ell(\bSigma)$ is interpreted as a quasi-likelihood.  \textcolor{black}{It is desirable that an estimator of $\bSigma$ be positive definite; previous work has achieved this imposing the domain constraint $\bSigma \succ {\bf 0}$. Alternatively, we may set $\Ell(\bSigma)=-\infty$ whenever $\Sigma$ fails to be positive definite.}

We seek to maximize \eqref{eq:ll} subject to the assumption that many of the entries in $\bSigma$ are zero. Accordingly, let $k$ denote the number of nonzero entries in the upper triangle, and let  $\lVert \bSigma \rVert_0$ denote the total number of nonzero entries in $\bSigma$. Sparse estimation of $\bSigma$ can be formally cast as the constrained optimization problem of minimizing
\begin{eqnarray}\label{eq:direct}
f(\bSigma)& = &\ln \det \bSigma + \text{tr} (\bSigma^{-1} \bS)
 \end{eqnarray}
subject to $\bSigma \succ {\bf 0}$ and $\bSigma$ belonging to the sparsity set
\begin{equation}\label{eq:constraintset}
\mathcal{C} \amp = \amp \left\{ \bSigma \in \Real^{p \times p} : \bSigma = \bSigma^T, \, \lVert \bSigma \rVert_0 \leq 2k+p  \right\}.
\end{equation}
Note here that that diagonal entries of $\bSigma$ are unconstrained.

Directly minimizing criterion \eqref{eq:direct} is challenging. Indeed, letting $m = \binom{p}{2}$, there are $\binom{m}{k}$ different sparsity patterns for a model with $k$ effective parameters. Thus, even ignoring the positive definiteness constraint, optimizing $f(\bSigma)$ over $\mathcal{C}$ quickly becomes combinatorially intractable. As a practical alternative, one can include a lasso penalty regularizing the $\ell_1$ norm of a function of $\bSigma$. Convex relaxation of the $\ell_0$ constraint appearing in $\mathcal{C}$ in this fashion provides a viable means of promoting sparsity by proxy. For covariance estimation, \cite{bien2011} consider such a penalty applied to $\bA \Hada \bSigma$, where $\bA$ has non-negative entries interpretable as weights, and $\Hada$ denotes the Hadamard or element-wise product. The resulting optimization problem
\begin{eqnarray}\label{eq:bien}
\text{minimize} \left\{ \ln \det \bSigma + \tr  (\bSigma^{-1} \bS) + \lambda \lVert \bA \Hada \bSigma \rVert_1  \right\} \text{ subject to \,} \bSigma \succ {\bf 0}  
\end{eqnarray} 
remains nontrivial. This nonconvex objective equals the difference of two convex functions. Exploiting this structure, \cite{bien2011} propose a majorization-minimization (MM) algorithm described in the next section. 

Including a lasso penalty as a proxy for the sparsity constraint entails shrinking the solution globally toward the origin.  Such shrinkage biases parameter estimates toward zero and tends to produce false positives. Nonetheless, several advantages have made the approach popular. Lasso penalties are convex, and their inclusion for solving convex objectives not only admits unique minimizers, but fast algorithms are available for finding their solutions \citep{friedman2007}. 
Unfortunately, as the covariance likelihood is already non-convex, adding an $\ell_1$ penalty in covariance estimation does not yield a convex objective and \textcolor{black}{does not enforce  
positive definiteness}. The remedy of embedding an inner iterative algorithm, such as alternating directions, within an outer gradient descent algorithm is often slow and unstable. Failures of positive definiteness also beset simple thresholding approaches \citep{rothman2009, rothman2012}, and similar remedies in this context are subject to the same criticisms \citep{xue2012}.

As an alternative to solve problem \eqref{eq:direct}, we propose minimizing the penalized objective 
\begin{equation}\label{eq:distobj}
h_\rho(\bSigma) = \ln \det \bSigma + \text{tr} (\bSigma^{-1} \bS) + \frac{\rho}{2} \text{dist}(\bSigma, \mathcal{C} )^2,
\end{equation}
where \textcolor{black}{$\dist(\bSigma, \mathcal{C}) = \inf \{ \|\bSigma - \bA\|:  \bA \in \mathcal{C} \}$} denotes the Euclidean distance from $\bSigma$ to $\mathcal{C}$. The distance penalty pulls the solution $\widehat{\bSigma}$ toward $\mathcal{C}$ and equals zero precisely when $\bSigma \in \mathcal{C}$. This novel formulation now entails solving an unconstrained optimization problem, but coincides with the original objective (2) in the limit as $\rho$ tends to $\infty$. This is summarized in the following restatement of the classical penalty method \citep{courant1943}.

\begin{proposition}\label{prop:clarke} 
Suppose that both the loss $f(\bx)$ and the nonnegative penalty $p(\bx)$ are continuous on $\mathbb{R}^p$ and that the penalized objectives 
\begin{eqnarray*}
F_n(\bx) & = & f(\bx)+\rho_n p(\bx)
\end{eqnarray*}
are coercive on $\mathbb{R}^p$. For any sequence $\rho_n$ increasing to $\infty$, there is a corresponding sequence of minimizers $\bx_n$ with $f(\bx_n) \le f(\bx_{n+1})$. Further, any cluster point of this sequence resides in the feasible region $\mathcal{S}=\{\bx: p(\bx)=0\}$ and attains the minimum value of $f(\bx)$. Finally, if $f(\bx)$ is coercive and possesses a unique minimizer $\bx^\ast$ in $S$, then the sequence $\bx_n$ converges to $\bx^\ast$.
\end{proposition}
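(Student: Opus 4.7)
The plan is to establish the four conclusions of the proposition in order. Existence of a minimizer $\bx_n$ of each $F_n$ follows immediately from continuity and coercivity via the Weierstrass theorem, so the sequence $\{\bx_n\}$ is well-defined. For the monotonicity $f(\bx_n)\le f(\bx_{n+1})$, I would write down the two defining inequalities from the minimization at steps $n$ and $n+1$:
\[
f(\bx_n)+\rho_n p(\bx_n)\le f(\bx_{n+1})+\rho_n p(\bx_{n+1}), \qquad f(\bx_{n+1})+\rho_{n+1}p(\bx_{n+1})\le f(\bx_n)+\rho_{n+1}p(\bx_n).
\]
Adding and rearranging gives $(\rho_{n+1}-\rho_n)(p(\bx_n)-p(\bx_{n+1}))\ge 0$, so $p(\bx_n)\ge p(\bx_{n+1})$; substituting this back into the first inequality yields the desired $f(\bx_n)\le f(\bx_{n+1})$.

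For the cluster-point statement, fix any $\by$ in the feasible set $\mathcal{S}$ (assumed nonempty, or the result is vacuous). Since $p(\by)=0$, optimality of $\bx_n$ gives
\[
f(\bx_n)+\rho_n p(\bx_n)=F_n(\bx_n)\le F_n(\by)=f(\by).
\]
Along a convergent subsequence $\bx_{n_k}\to\bx^{\infty}$, continuity forces $f(\bx_{n_k})\to f(\bx^{\infty})$, so $\rho_{n_k}p(\bx_{n_k})\le f(\by)-f(\bx_{n_k})$ stays bounded. Because $\rho_{n_k}\uparrow\infty$ and $p\ge 0$, this forces $p(\bx_{n_k})\to 0$, and continuity of $p$ yields $p(\bx^{\infty})=0$, i.e.\ $\bx^{\infty}\in\mathcal{S}$. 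Passing to the limit in $f(\bx_{n_k})\le f(\by)-\rho_{n_k}p(\bx_{n_k})\le f(\by)$ then gives $f(\bx^{\infty})\le f(\by)$ for arbitrary $\by\in\mathcal{S}$, so $\bx^{\infty}$ attains $\min_{\mathcal{S}}f$.

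For the final assertion, the inequality $f(\bx_n)\le f(\by)$ from the display above combined with coercivity of $f$ places the entire sequence $\{\bx_n\}$ in the bounded sublevel set $\{\bx:f(\bx)\le f(\by)\}$. Hence every subsequence of $\{\bx_n\}$ admits a further convergent sub-subsequence whose limit, by the previous step, is a minimizer of $f$ on $\mathcal{S}$ and therefore equals $\bx^{\ast}$ by uniqueness. A standard subsequence argument (every subsequence has a sub-subsequence converging to the same limit) then gives $\bx_n\to\bx^{\ast}$.

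I don't anticipate a deep obstacle; the proof is a textbook exterior-penalty argument. The one point that merits care is that $f$ need not be bounded below globally, so the bound on $\rho_{n_k}p(\bx_{n_k})$ must be extracted along a convergent subsequence where continuity supplies the lower bound on $f(\bx_{n_k})$. The monotonicity rearrangement, while elementary, also hinges on correctly pairing the two optimality inequalities; once that is in place the remaining deductions are routine.
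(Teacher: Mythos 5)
Your proof is correct, and it is the standard exterior-penalty argument. Note, however, that the paper itself supplies no proof of this proposition: it is presented as a restatement of the classical penalty method and attributed to Courant (1943), with the paper's appendix proving only Proposition~2, Theorem~1, and the supporting lemmas. Your derivation of $p(\bx_n)\ge p(\bx_{n+1})$ from the paired optimality inequalities, the use of $F_n(\bx_n)\le f(\by)$ for $\by\in\mathcal{S}$ to force $p(\bx_{n_k})\to 0$, and the subsequence argument for the final claim are all exactly what one would expect. One small simplification you could make: since you have already established $f(\bx_n)\le f(\bx_{n+1})\le f(\by)$, the monotone sequence $f(\bx_n)$ is bounded below by $f(\bx_1)$, so $\rho_n p(\bx_n)\le f(\by)-f(\bx_1)$ is bounded along the \emph{entire} sequence without passing to a convergent subsequence first; the concern you flag about $f$ not being globally bounded below is thus preempted by the monotonicity you proved. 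You are also right to flag the implicit nonemptiness of $\mathcal{S}$ as a hidden hypothesis that the proposition statement omits.
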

This result justifies improving the objective (\ref{eq:distobj}) while gradually increasing the penalty parameter $\rho$ instead of directly tackling the constrained problem \eqref{eq:direct}. We show in the Appendix that positive definiteness of the sample covariance $\bS$ is sufficient to satisfy the technical requirement of coercivity. Notably, coercivity fails to hold when $p>n$, but it can be reintroduced by adding a small multiple of the identity to $\bS$. We later observe that this safeguard is numerically unnecessary in practice, with a negligible difference in performance. In its favor, the distance penalized formulation circumvents explicit consideration of the constraints, evades global shrinkage, and lends itself to the derivation of a practical algorithm via majorization-minimization.

\section{Majorization-Minimization}

Majorization-minimization (MM) algorithms are becoming increasingly popular in solving large-scale optimization problems in statistics and machine learning \citep{mairal2015,lange2016,xu2019power}. A majorization-minimization algorithm successively minimizes a sequence of surrogate functions $g(\bx \mid \bx_k)$ that dominate an objective function $f(\bx)$ and are tangent to it at the current iterate  $\bx_k$. Decreasing $g(\bx \mid \bx_k)$  automatically engenders a decrease in $f(\bx)$, and a local optimum of $f(\bx)$ is found by successively minimizing the sequence of surrogates.  

Majorization requires two conditions: tangency and domination. Formally, these amount to $g(\bx_k~\mid~\bx_k)~= ~f(\bx_k)$ and $g(\bx \mid \bx_k)  \geq f(\bx)$ for every $\bx$. The resulting update $\bx_{k+1}~=~\argmin_{\bx} g(\bx \mid \bx_k)$ implies the string of inequalities
\begin{equation}\label{eq:descent}
  f(\bx_{k+1}) \leq g(\bx_{k+1} \mid \bx_{k}) \leq g(\bx_{k} \mid \bx_{k}) = f(\bx_{k}),
\end{equation}
validating the descent property. Examination of the proof of descent \eqref{eq:descent} shows that exact minimization of $g(\bx \mid \bx_k)$ is not strictly necessary, a practical advantage that we will utilize. The celebrated expectation-maximization (EM) principle \citep{dempster1977} for maximum likelihood estimation is a special case of this principle that relies on the notion of missing data. \textcolor{black}{In this setting the surrogate $g(\bx \mid \bx_k)$ is defined as the expected value of the complete data log-likelihood given the observed data.}

The majorization-minimization principle thus offers a general recipe for converting a hard optimization problem into a sequence of manageable subproblems. The distance-to-set penalty $\text{dist}(\bx, \mathcal{C} )^2$ enters this framework 
through \textit{distance majorization} \citep{chi2014}. The key idea is to write the penalty in terms of the Euclidean norm as
\begin{equation} \label{eq:distance_major}
\text{dist}(\bx, \mathcal{C}) =
\min_{\by \in \mathcal{C}}
\lVert \bx - \by \rVert_2
=\lVert \bx - P_{\mathcal{C}}(\bx) \rVert_2,
\end{equation}
where $P_{\mathcal{C}}(\bx)$ denotes the projection of $\bx$ onto the constraint set $\mathcal{C}$. Squaring the distance term is a practical maneuver that leads to  differentiability and the simple gradient
\begin{equation*}
\nabla \frac{1}{2}\dist(\bx, \mathcal{C})^2 = \bx - P_{\mathcal{C}}(\bx)
\end{equation*}
when $P_{\mathcal{C}}(\bx)$ is single-valued \citep{lange2016}. Fortunately,  the projection operator $P_S(\bx)$ onto any closed set $S$ is single valued except for a set of Lebesgue measure $0$ \citep{keys2019}. Hence, the technical possibility that $P_{\mathcal{C}}(\bx)$ becomes multi-valued for $\mathcal{C}$ non-convex is vanishingly rare from a theoretical perspective. Indeed, this event is negligible in practice as well: if a multi-valued point is encountered, the user is shielded from this exception because the code automatically selects a point in $P_C(x_k)$ and delivers  a valid surrogate.
The distance majorization
$$\text{dist}\left(\bx, \mathcal{C}\right)^2 = \lVert \bx - P_{\mathcal{C}}(\bx) \rVert_2^2 
 \leq \lVert \bx -\by_k \rVert_2^2 \quad \text{ for all }\, \by_k \in P_{\mathcal{C}}(\bx_k)$$ follows directly from the definition of the projection operator 
 $P_{\mathcal{C}}(\bx)$. 
 This majorization is useful in practice because it replaces the distance penalty by a spherically symmetric quadratic with the same gradient at $\bx_k$. 

Recall in the present context, the relevant constraint set $\mathcal{C}$ defined in \eqref{eq:constraintset} consists of all symmetric matrices with at most $k$ nonzero entries in their upper triangle. The choice of $k$ determines the level of sparsity. Computing the projection of a symmetric matrix onto $\mathcal{C}$ is accomplished by setting all but the diagonal and the $k$ largest entries (in absolute value) of each triangle to zero. Because we have not yet imposed the constraint on positive definiteness, this projection is  computed simply by \textcolor{black}{hard thresholding entries in the upper triangle, and propagating the results to the lower triangle symmetrically.}  The next section describes how distance majorization creates a sequence of unconstrained smooth problems, as well as how the positive definiteness constraint can be enforced via simple backtracking.

\section{Algorithm for Sparse Covariance Estimation}
\subsection{A Proximal Distance Algorithm}
The recently introduced proximal distance principle \citep{keys2019} \textcolor{black}{ replaces the constrained problem $\min_{\bx \in \mathcal{C}} f(\bx)$ by unconstrained minimization of the penalized loss $f(\bx)+\frac{\rho}{2}\text{dist}(\bx, \mathcal{C})^2$. In our setting,
the loss (\ref{eq:direct}) plays the role of $f(\bx)$ under the sparsity constraint (\ref{eq:constraintset}).  The unconstrained reformulation  (\ref{eq:distobj}) can then be solved using distance majorization.} Proposition \ref{prop:clarke} implies that if $\rho$ is sufficiently large, then the solution of the penalized problem accurately approximates the solution of the constrained problem. For any given value of $\rho$, applying the proximal distance principle requires majorizing the objective $h_\rho(\bSigma)$ by the function 
$g_\rho(\bx \mid \bx_k )= f(\bx) + \frac{\rho}{2} \lVert \bx - P_\mathcal{C}(\bx_k) \rVert^2$. Now, the minimizer of this surrogate function is given by a proximal operator. Recall that for any function $r(\bx)$, the proximal operator is defined
$$\text{prox}_{\lambda r}(\by) \equiv \underset{\bx}\argmin \,\Big[ r(\bx) + \frac{1}{2\lambda} \lVert \bx - \by \rVert_2^2\Big] $$ with 
$\lambda=\rho^{-1}$. The operator $\text{prox}_{\lambda r}(\by)$ represents a compromise between minimizing $r(\bx)$ and hewing toward $\by$, with the parameter $\lambda$ modulating the tradeoff; \cite{polson2015} provide an excellent overview of proximal methods in statistics.

\textcolor{black}{Because it is not possible to find an analytic expression for the proximal operator of $f(\bx)$, we cannot easily minimize the surrogate function $g_\rho(\bSigma \mid \bSigma_k)$ generated by the covariance likelihood. Instead, we construct a more useful surrogate function $q_\rho(\bSigma \mid \bSigma)$ using a local quadratic approximation tailored to $g_\rho(\bSigma \mid \bSigma_k)$. } This local surrogate possesses three advantages over linear surrogates used in past approaches \citep{bien2011,azose2016}. First, a quadratic surrogate provides a tighter approximation than the linear surrogates previously applied in this problem, often translating to dramatically more efficient steps toward the optimum. \textcolor{black}{Second, the proximal operator of our surrogate} $q_\rho(\bSigma \mid \bSigma_k)$ admits a closed form solution; each subproblem can be minimized exactly, in contrast to gradient steps whose progress and stability depends heavily on the choice of step sizes. Finally, by exploiting a surprising connection to control theory, evaluation of these closed solutions becomes practical, effecting a reduction in computational complexity from $\mathcal{O}(p^6)$ using a na\"ive evaluation to a more tractable $\mathcal{O}(p^3)$. 

\subsection{\textcolor{black}{Constructing and Minimizing the Surrogate}}

Recall that the relevant loss is $f(\bSigma)=\ln \det \bSigma + \text{tr} (\bSigma^{-1} \bS)$. To define a sequence of quadratic approximations to 
$f(\bSigma)$, we take matrix directional derivatives of $f(\bSigma)$ in the directions $\bU$ and $\bV$,
\begin{align*}
d_{\bV} f(\bSigma) &= \tr (\bSigma^{-1} \bV) - \tr (\bSigma^{-1} \bV \bSigma^{-1} \bS) \\
d_{\bU} d_{\bV} f(\bSigma) &= -\tr (\bSigma^{-1} \bU \bSigma^{-1} \bV) + \tr (\bSigma^{-1} \bU \bSigma^{-1} \bV \bSigma^{-1} \bS ) + \tr (\bSigma^{-1} \bV \bSigma^{-1} \bU \bSigma^{-1} \bS ) \\
 \bV^T d^2 f(\bSigma) \bV &= -\tr( \bSigma^{-1} \bV \bSigma^{-1} \bV) + 2 \tr (\bSigma^{-1} \bV \bSigma^{-1} \bV \bSigma^{-1} \bS ) ,
\end{align*}
where the quadratic form in the last line is obtained by setting $\bU=\bV$. The second differential simplifies considerably if we replace $\bS$ by its expected value $\mathbb{E}(\bS) = \bSigma$. This maneuver is familiar from the derivation of Fisher's scoring algorithm. This substitution precipitates a cancellation of higher order terms; the overall result 
 \[ \bV^T d^2 f(\bSigma)\bV  \approx  \tr( \bSigma^{-1} \bV \bSigma^{-1} \bV) \]
is a positive definite quadratic form. We may now define an approximate quadratic surrogate 
$q_\rho(\bSigma \mid \bSigma_k)$ of the penalized objective $h_\rho(\bSigma)$ by taking a second-order Taylor expansion of the loss $f(\bSigma)$ about the current estimate, namely
\begin{align*}
 q_\rho( \bSigma \mid \bSigma_k ) &= f(\bSigma_k) + \tr[ \bSigma_k^{-1} (\bSigma - \bSigma_k) ] - \tr[ \bSigma_k^{-1} \bS \bSigma_k^{-1} (\bSigma - \bSigma_k ) ] \\
 & \qquad + \frac{1}{2} \tr [ \bSigma_k^{-1} (\bSigma - \bSigma_k) \bSigma_k^{-1} (\bSigma - \bSigma_k) ] + 
 \frac{\rho}{2} \lVert \bSigma - P_\mathcal{C}(\bSigma_k) \rVert_F^2.
 \end{align*}
Here the majorized distance penalty appears as the final term. In contrast to an $\ell_1$-penalized loss, this surrogate is differentiable with gradient expression
\[ \frac{d}{d\bSigma}q_\rho(\bSigma\mid\bSigma_k) =   \bSigma_k^{-1} - \bSigma_k^{-1} \bS \bSigma_k^{-1} + \bSigma_k^{-1} (\bSigma - \bSigma_k) \bSigma_k^{-1} +\rho [ \bSigma - P_\mathcal{C}(\bSigma_k)] . \]
Equating the gradient to $\bf 0$ and rearranging yields the stationarity equation 
\begin{equation}\label{eq:sylvester}
 \rho P_\mathcal{C}(\bSigma_k) + \bSigma_k^{-1} \bS \bSigma_k^{-1} = \rho \bSigma + \bSigma_k^{-1} \bSigma \bSigma_k^{-1}. \end{equation}
If we abbreviate the left-hand side by
\[ \bC_k \equiv \rho P_\mathcal{C}(\bSigma_k) + \bSigma_k^{-1} \bS \bSigma_k^{-1}  \]  
and stack matrices into vectorized notation, then equation \eqref{eq:sylvester} can be rewritten
\[ \vec(\bC_k) = \rho \vec(\bSigma) + (\bSigma_k^{-1} \otimes \bSigma_k^{-1}) \vec(\bSigma), \]
where  $\otimes$ denotes the Kronecker product. Upon inversion, the solution amounts to
\begin{equation}\label{eq:sylvsol}
 \vec(\widehat{\bSigma}) = \left[\rho \bI_{p^2}  + (\bSigma_k^{-1} \otimes \bSigma_k^{-1}) \right]^{-1} \vec(\bC_k),
\end{equation}
and we may recover the minimizer $\widehat{\bSigma}$ by reshaping. 
The analytic solution \eqref{eq:sylvsol} involves the inverse of a
$p^2 \times p^2$ matrix and hence scales as $\mathcal{O}(p^6)$.
This computational load puts problems with even moderate dimension $p$ beyond reach.
Upon multiplying both sides by the constant $\bSigma_k$ and closer inspection, equation \eqref{eq:sylvester} takes the general form 
$\bA\bSigma + \bSigma\bB = \bC$, which we recognize as a Sylvester equation in $\bSigma$. Like the closely related and better-known Lyapunov equations arising in dynamical systems, Sylvester equations are well-studied in control theory and eigenvalue problems \citep{higham2002}. It is known that the equation has a unique solution if and only if the eigenvalues of $\bA$ and $-\bB$ are distinct; this condition holds in the present case because $\bSigma_k^{-1}$ is positive definite. More pertinently, we can borrow a numerical method from the control theory literature. An algorithm due to \cite{bartels1972} provides a more efficient solution than direct evaluation of equation \eqref{eq:sylvsol}. 
The first step and crux of the procedure lies in transforming the problem into Schur form by computing decompositions $\bA = \bU \bR \bU^T$ and $\bB = \bV \bS \bV^T$ via the QR algorithm. Because $\bR$ and $\bS$ are upper triangular, the equivalent upper triangular system $\bR \bY + \bY \bS = \bU^T \bC \bV$ with $\bY = \bU^T \bSigma \bV$ can be solved by simple back-substitution. Multiplication then recovers the original solution $\bSigma = \bU \bY \bV^T$. The computational complexity declines from $\mathcal{O}(p^6)$ operations required to compute formula \eqref{eq:sylvsol} to $\mathcal{O}(p^3)$. 
Current state-of-the-art implementations are variations on this theme and possess the same overall complexity; see \cite{simoncini2016} for details.

Before proceeding, we briefly mention that sparse fitting of the sample correlation matrix can exploit the same algorithm with a simple modification that has been noted previously in the literature. Let $\bR = \bD^{-1/2} \bS \bD^{-1/2}$ denote the sample correlation matrix, where $\bD = \text{diag}(\bS)$ contains the observed variances. In estimation with $\bR$ replacing 
$\bS$, we minimize the criterion  
$$\ln\det \bTheta + \tr(\bTheta^{-1}\bR ) + \frac{\rho}{2} 
\text{dist}(\bTheta,\mathcal{CR})^2$$ 
over $\bTheta \succ {\bf 0}$, where $\mathcal{CR}$ is the set of $k$-sparse
symmetric matrices with unit diagonal entries. Projection of $\bTheta_k$ onto this set maps the diagonal entries of $\bTheta_k$ to $1$ and treats the off-diagonal entries as before.

\subsection{Positive Definiteness and Gradient Interpretation}
So far, the penalty term in the objective \eqref{eq:distobj} only accounts for the sparsity set constraint $\mathcal{C}$. Because the matrix $\rho P_\mathcal{C}(\bSigma_k) + \bSigma_k^{-1} \bS \bSigma_k^{-1}$ is not guaranteed to be positive definite, neither is the solution $\widehat\bSigma$ that minimizes the surrogate given in equation \eqref{eq:sylvester}. Moreover, the approximate surrogate $q_\rho(\bSigma \mid \bSigma_k)$ does not strictly majorize $h_\rho(\bSigma) =f(\bSigma)+\rho/2 \cdot \dist(\bSigma, \mathcal{C})^2$ for all possible $\bSigma$, and so na\"ively minimizing $q_\rho(\bSigma \mid \bSigma_k)$ does not necessarily decrease $h_\rho(\bSigma)$. Both of these issues can be handled gracefully via backtracking. The next proposition ensures the success of step-halving, which amounts to defining 
\begin{equation}\label{eq:backtrack}
\bSigma_{k+1} = \bSigma_k + \frac{1}{2^s}(\widehat\bSigma - \bSigma_k) 
\end{equation} 
based on the smallest integer $s \ge 0$ that renders $\bSigma_{k+1} \succ {\bf 0}$ and decreases the objective $h_\rho(\bSigma)$. 
The result becomes clear after considering the representation
\begin{equation}\label{eq:newton}
\widehat\bSigma = \bSigma_k +\bv_k 
= \bSigma_k -\bH_k^{-1} \nabla q_\rho(\bSigma_k \mid \bSigma_k),
\end{equation}
where $\bH_k$ is the scoring approximation to the Hessian; a complete proof of the following proposition appears in the Appendix.
\begin{proposition} \label{step_halving_prop}
If $\bSigma_k$ is not a stationary point of $h_\rho(\bSigma)$, then there exists an integer $s$ such that $\bSigma_{k+1}$ given in equation \eqref{eq:backtrack} satisfies  $\bSigma_{k+1} \succ {\bf 0}$ and $h_\rho(\bSigma_{k+1}) < h_\rho(\bSigma_k)$ .
\end{proposition}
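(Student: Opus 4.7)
The plan is to recognize the proposed update as a Newton-type step with a positive definite Hessian approximation, and then invoke a standard backtracking line-search argument. Concretely, I would show that $\bv_k := \widehat{\bSigma} - \bSigma_k$ is a descent direction for $h_\rho$ at $\bSigma_k$, and then combine openness of the positive definite cone with a first-order Taylor expansion of $h_\rho$ along $\bv_k$.

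First I would verify the gradient identity $\nabla q_\rho(\bSigma_k \mid \bSigma_k) = \nabla h_\rho(\bSigma_k)$. The linear terms in the surrogate's Taylor expansion reproduce $\bSigma_k^{-1} - \bSigma_k^{-1}\bS\bSigma_k^{-1}$, the quadratic term vanishes at $\bSigma = \bSigma_k$, and the penalty contribution $\rho(\bSigma_k - P_\mathcal{C}(\bSigma_k))$ agrees with the gradient of $\tfrac{\rho}{2}\dist(\cdot,\mathcal{C})^2$ at $\bSigma_k$ whenever the projection is single-valued (and if multi-valued, we fix a representative exactly as discussed after equation (7)). Next I would show that the Hessian approximation $\bH_k : \bV \mapsto \bSigma_k^{-1}\bV\bSigma_k^{-1} + \rho \bV$, read off from the Sylvester stationarity equation (11), is positive definite on the space of symmetric matrices. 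In vectorized form $\bH_k = \bSigma_k^{-1}\otimes \bSigma_k^{-1} + \rho \bI_{p^2}$; since $\bSigma_k^{-1}\succ \mathbf{0}$, the Kronecker product of two positive definite matrices is positive definite, and $\rho>0$ finishes the argument.

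With these two ingredients, representation (14) gives $\bv_k = -\bH_k^{-1}\nabla h_\rho(\bSigma_k)$, so whenever $\bSigma_k$ is not a stationary point of $h_\rho$,
\[
\langle \nabla h_\rho(\bSigma_k),\, \bv_k\rangle \;=\; -\,\nabla h_\rho(\bSigma_k)^{T}\bH_k^{-1}\nabla h_\rho(\bSigma_k) \;<\; 0.
\]
Because $\bSigma_k$ lies in the open positive definite cone, continuity of the eigenvalues of $\bSigma_k + 2^{-s}\bv_k$ in $s$ ensures that $\bSigma_{k+1} \succ \mathbf{0}$ for all sufficiently large $s$. Similarly, smoothness of $f$ on the positive definite cone, combined with the Taylor expansion $h_\rho(\bSigma_k + t\bv_k) = h_\rho(\bSigma_k) + t\,\langle \nabla h_\rho(\bSigma_k), \bv_k\rangle + o(t)$ along the direction $\bv_k$, yields strict descent $h_\rho(\bSigma_k + 2^{-s}\bv_k) < h_\rho(\bSigma_k)$ once $2^{-s}$ is small enough. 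Taking $s$ large enough to secure both the positive definiteness and the descent condition simultaneously delivers the conclusion.

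The main obstacle is the potential non-smoothness of $\dist(\cdot,\mathcal{C})^2$ at points where $P_\mathcal{C}(\bSigma_k)$ is multi-valued, since then the first-order Taylor expansion of the penalty term is not immediately justified. I would handle this by following the remark made in the paper before Proposition \ref{step_halving_prop}: upon selecting any $\by_k \in P_\mathcal{C}(\bSigma_k)$, the majorant $\tfrac{\rho}{2}\lVert \bSigma - \by_k\rVert_F^2$ is globally smooth, agrees with $\tfrac{\rho}{2}\dist(\bSigma_k,\mathcal{C})^2$ at $\bSigma_k$, and dominates $\tfrac{\rho}{2}\dist(\bSigma,\mathcal{C})^2$ everywhere; its directional derivative along $\bv_k$ therefore upper bounds that of $h_\rho$, so the descent inequality obtained via the smooth majorant propagates to $h_\rho$ itself. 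Modulo this point, the proof reduces to the classical sufficient-decrease argument for a descent direction generated by a positive definite preconditioner.
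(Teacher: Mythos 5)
Your proposal is correct and takes essentially the same route as the paper's own argument: identify $\bv_k$ as a Newton-type step preconditioned by the positive definite scoring Hessian, show that it is a descent direction by comparing the gradients of the quadratic surrogate $q_\rho$ and the distance-majorant $g_\rho$ at $\bSigma_k$, and then invoke openness of the positive definite cone to conclude that a finite number of halvings suffices. The only stylistic difference is that you first reason directly in terms of $\nabla h_\rho$ and patch up the possible multi-valuedness of $P_\mathcal{C}(\bSigma_k)$ afterward, whereas the paper sidesteps that subtlety from the start by working through the majorant $g_\rho(\cdot\mid\bSigma_k)$, which is smooth for any fixed representative and dominates $h_\rho$ with equality at $\bSigma_k$; you also write the Hessian form more explicitly as $\bV\mapsto\bSigma_k^{-1}\bV\bSigma_k^{-1}+\rho\bV$, which is a small but accurate refinement of the paper's statement.
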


\noindent \textcolor{black}{The overall method is summarized in pseudocode in Algorithm \ref{alg:1}, which reveals that the careful technical work behind the preceding analysis is largely hidden from the user's perspective. The resulting algorithm is relatively transparent and easy to implement.}

\begin{algorithm}[h]
\DontPrintSemicolon
  \KwData{Sample covariance $\bS$, sparsity level $K$, parameter $\rho>0$, tolerance $\epsilon>0$. }
  
  Initialize $i=1$, $\bSigma_i = \text{Diag}(\bS)$, $h_0 = 0,$ and $h_i = 1$. 
  
  \While{$ | \frac{h_{i} - h_{i-1}}{h_{i-1}} | > \epsilon$}
    {
        Compute $\hat\bSigma$ via solving Sylvester equation or  \eqref{eq:sylvsol}, with $\bC_i = \rho P_\mathcal{C}(\bSigma_i) + \bSigma_i^{-1} \bS \bSigma_i^{-1},$ where $P_\mathcal{C}(\bSigma) $ sets all but the $K$ largest magnitude entries of $\bSigma$ to zero.
        
        Set $\bSigma_{i+1} = \hat\bSigma$, $j=1$ \tcp*{check that $\bSigma_{i+1} \succeq \mathbf{0}$, else backtrack }
        
        \While{$\mathbf{0} \succ \bSigma_{i+1}$}  
        {        
        $\bSigma_{i+1} = \bSigma_i + \frac{1}{2^j}(\widehat\bSigma - \bSigma_i)$ 
        
        $j = j+1$
        }
        
        Update objective $h_{i+1} = \ln \det \bSigma_{i+1} + \text{tr} (\bSigma_{i+1}^{-1} \bS) + \frac{\rho}{2} \| P_\mathcal{C}(\bSigma_{i+1}) - \bSigma_{i+1} \|^2 $
        
        Update $i = i+1$; $\rho = \rho \cdot 1.2$

    }
    
    \textbf{Return:} $\bSigma_i$
\vspace{10pt}
\caption{\textcolor{black}{Example pseudocode  implementation of proximal distance algorithm. }}\label{alg:1}
\end{algorithm}

Before proceeding further, let us pause to compare our surrogate to the surrogate proposed in the sparse covariance method of \cite{bien2011}. Based on the concave-convex procedure of \cite{yuille2003}, they employ the tangent plane majorizer
\[ \ln \det \bSigma_k + \text{tr} (\bSigma_k^{-1} \bSigma) - p + \text{tr} (\bSigma^{-1} \bS) + \lambda \lVert \bA \Hada \bSigma \rVert_1 \]
to the $\ell_1$-penalized objective \eqref{eq:bien}. 
The resulting majorization-minimization iteration
\[ \bSigma_{k+1} = \underset{\bSigma \succ {\bf 0}} \argmin \left[ \text{tr} (\bSigma_{k}^{-1} \bSigma) + \text{tr} (\bSigma^{-1} \bS) + \lambda \lVert \bA \Hada \bSigma \rVert_1 \right] \]
is carried out via generalized gradient descent \citep{beck2009}. The choice of a good step-size is crucial for a reasonable rate of convergence in practice. Because the linear approximation only loosely models their objective function
\[\ln \det \bSigma + \text{tr} (\bSigma^{-1} \bS) + \lambda \lVert \bA \Hada \bSigma \rVert_1, \] 
a given step-size may be well-suited at some points but may drastically overshoot the minimum or exit the positive definite cone at others. Whenever the latter occurs, an additional subproblem must be solved by an alternating directions method \citep{boyd2011}. This inner optimization loop slows
convergence and is decidedly more difficult to implement than backtracking. Stability can be enhanced by decreasing the initial step-size at the expense of more outer iterations. Our quadratic expansion of the log-likelihood produces an approximate surrogate $q_\rho(\bSigma \mid \bSigma_k)$ that hugs our objective $h_\rho(\bSigma)$ more closely. Substitution of a distance penalty for a lasso penalty also enjoys smoothness. \textcolor{black}{In practice, the minimizer
$\widehat\bSigma$ of $q_\rho(\bSigma \mid \bSigma_k)$ rarely fails to diminish $h_\rho(\bSigma)$ or stay within the positive definite cone, so typically we update $\bSigma_{k+1} = \widehat\bSigma$ without backtracking}.  These differences translate to substantial performance advantages, as illustrated in Section \ref{sec:results}.

\textcolor{black}{Finally, an anonymous reviewer has raised the question of  convergence to the global optimum, a valid concern that besets all non-convex optimization problems. To our disappointment, we initially found that our algorithm was somewhat sensitive to initial guesses $\bSigma$ close to the sample covariance matrix $\bS$.  After some experimentation we discovered that the algorithm delivers remarkably stable performance when initiated instead as a diagonal matrix, with sample variances appearing along the diagonal. It may be that perturbations of the full $\bS$ are more likely to lie close to the constraint boundary. This can impede progress and precipitate smaller gradient steps when more backtracking is necessary. Starting from a diagonal or even identity matrix, the algorithm tends to stay well within the interior of the positive definite cone, and has more room to learn from the data and in turn consistently reach the optimum.}

\subsection{Convergence}
Recall that due to non-convexity of the symmetric sparsity set $\mathcal{C}$, it is possible that there exist exceptional points at which the projection operator $P_\mathcal{C}(\bSigma)$ is multi-valued.  The penalty $\text{dist}(\bSigma,\mathcal{C})^2$ and in turn the objective $h_\rho(\bSigma)$ are differentiable where $P_\mathcal{C}(\bSigma_k)$ is single-valued, but merely semi-differentiable elsewhere. In contrast, the surrogate $q_\rho(\bSigma \mid \bSigma_k)$ is differentiable regardless of the projected point selected from $P_\mathcal{C}(\bSigma_k)$. 

Although standard convergence results for gradient methods and majorization-minimization algorithms do not immediately apply in proving convergence \citep{lange2016}, theoretical guarantees can be established by appealing to the general theory of \cite{zangwill}, which encompasses continuous objectives and multi-valued algorithm maps. \textcolor{black}{ One can represent our method as an algorithm map $A(\bSigma)$ taking the current iterate $\bSigma_k$ to the next iterate $\bSigma_{k+1}$. Our novel analysis below treats $A$ as a set-valued map to fully account for the technical possibility that the projection operator is multi-valued, even though the set of points where this can occur has measure zero. The following global convergence result is proved in the Appendix.} 
\begin{theorem}\label{thm:convergence}
Consider the sequence $\bSigma_{k+1} = \bSigma_k + \eta_k \bv_k \in A(\bSigma_k)$ generated by the search direction $\bv_k$ of equation \eqref{eq:newton} and the step length $\eta_k = \argmin_{\eta \in [0,1]} h_\rho(\bSigma_k+\eta \bv_k)$. If the initial point $\bSigma_0$ is positive definite and the sample covariance matrix $\bS$ is nonsingular, then the sequence $\bSigma_k$ is bounded and falls within the interior of the positive definite cone. Furthermore, all of its limit points are stationary points of $h_\rho(\bSigma)$. 
\end{theorem}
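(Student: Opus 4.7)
The strategy is to verify the hypotheses of Zangwill's global convergence theorem for set-valued algorithm maps, treating the iteration as $\bSigma_{k+1} \in A(\bSigma_k)$ where the only source of multivaluedness in $A$ is the projection $P_\mathcal{C}(\bSigma_k)$ onto the non-convex set $\mathcal{C}$. The three ingredients needed are a continuous descent function, compactness of the iterate trajectory in a domain on which $A$ is well-defined, and closedness of $A$ outside the designated solution set.

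First I would establish descent and boundedness. The exact line search $\eta_k = \argmin_{\eta \in [0,1]} h_\rho(\bSigma_k + \eta \bv_k)$ is at least as aggressive as the step-halving scheme of Proposition~\ref{step_halving_prop}, so $h_\rho(\bSigma_{k+1}) \le h_\rho(\bSigma_k)$ with strict inequality whenever $\bv_k \ne {\bf 0}$. Combined with the coercivity of $h_\rho$ that follows from $\bS \succ {\bf 0}$ (invoked after Proposition~\ref{prop:clarke} and justified in the Appendix), monotone descent confines $\{\bSigma_k\}$ to the compact sublevel set $\{\bSigma : h_\rho(\bSigma) \le h_\rho(\bSigma_0)\}$, which gives boundedness immediately.

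Next I would show the iterates remain in the interior of the positive definite cone. Since $\bS \succ {\bf 0}$ has smallest eigenvalue $s > 0$, the inequality $\tr(\bSigma^{-1}\bS) \ge s\,\tr(\bSigma^{-1}) \ge s/\lambda_{\min}(\bSigma)$ diverges polynomially as $\lambda_{\min}(\bSigma) \to 0^+$, dominating the merely logarithmic divergence of $\ln\det\bSigma \sim p\ln\lambda_{\min}(\bSigma)$. Hence $h_\rho(\bSigma) \to +\infty$ along any path approaching the boundary of the cone, so the sublevel set is a compact subset of the open positive definite cone, and every $\bSigma_k$ is strictly positive definite. In turn the Hessian surrogate $\bH_k$ and search direction $\bv_k$ are well-defined throughout, and $\bv_k = {\bf 0}$ forces $\nabla q_\rho(\bSigma_k\mid\bSigma_k) = {\bf 0}$. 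Because the quadratic surrogate is tangent to $f$ and the distance majorization shares gradients with the distance penalty at $\bSigma_k$, this vanishing gradient coincides with stationarity of $h_\rho$ at $\bSigma_k$, identifying Zangwill's solution set with the stationary points of $h_\rho$.

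The principal obstacle is verifying closedness of the set-valued map $A$ in the presence of the multi-valued projection. I would argue that because $\mathcal{C}$ is closed, $P_\mathcal{C}$ has a closed graph and is therefore upper semicontinuous as a set-valued map. Composing this with the continuous operations that form $\bC_k$, solve the Sylvester equation for $\widehat\bSigma$, and perform the exact one-dimensional line search preserves closedness: given $\bSigma^{(j)} \to \bar\bSigma$ together with any accompanying sequence of projected points and line-search minimizers converging along a subsequence, the resulting limit lies in $A(\bar\bSigma)$. Coupled with continuity of $h_\rho$, strict descent off the solution set, and compactness established above, Zangwill's theorem then implies that every limit point of $\{\bSigma_k\}$ is a stationary point of $h_\rho$, completing the argument.
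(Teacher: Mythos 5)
Your proof follows the same overall architecture as the paper's: both invoke Zangwill's Global Convergence Theorem, establish boundedness of the iterates via coercivity of $h_\rho$ (which the paper proves in Lemma~\ref{lem:coercive} using the Von Neumann--Fan trace inequality, giving a bound equivalent in spirit to your explicit $\lambda_{\min}(\bSigma)$ argument), confirm strict descent off the stationary set, and then verify closedness of the set-valued algorithm map. Your identification of Zangwill's solution set with the stationary points of $h_\rho$ via the tangency of the surrogate gradient also matches the paper exactly.

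The one place you take a genuinely different route is the closedness argument, and that is the most delicate step. You appeal to the closed-graph property of $P_\mathcal{C}$ (true, since $\mathcal{C}$ is closed and the distance function is continuous) and assert that composing with continuous operations preserves closedness. That conclusion is correct here, but as stated it hides two needed ingredients: (i) local boundedness of $P_\mathcal{C}$ — a closed graph alone does not yield outer semicontinuity for set-valued maps without boundedness, though it holds in your case since hard thresholding satisfies $\|P_\mathcal{C}(\bSigma)\|_F \le \|\bSigma\|_F$; and (ii) closedness of the exact line-search $\argmin$ map, which requires a Berge-type maximum theorem using joint continuity of $h_\rho(\bSigma + \eta\bv)$ on the relevant compact domain. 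The paper's proof of Lemma~\ref{lem:zangwill} avoids invoking these general results by a more elementary observation: there are only finitely many sparsity index sets, so along any sequence $\bTheta_k \in P_\mathcal{C}(\bSigma_k)$ one index set recurs infinitely often, and along that subsequence the projection reduces to a fixed, continuous coordinate selection, after which the limit search direction and step length are constructed explicitly. Your abstract argument is conceptually cleaner and generalizes to other constraint sets, while the paper's combinatorial argument is self-contained and specific to hard thresholding; both close the gap, but you should make the local boundedness and the line-search closedness explicit if you use your version.

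Minor note: the interior-of-cone argument you give via $\tr(\bSigma^{-1}\bS) \ge s/\lambda_{\min}(\bSigma)$ dominating $p\ln\lambda_{\min}(\bSigma)$ is correct and a nice concrete alternative, but it is already subsumed by coercivity — the sublevel set $\{h_\rho \le h_\rho(\bSigma_0)\}$ being compact and contained in the open cone follows directly from Lemma~\ref{lem:coercive}'s proof, which shows $f(\bSigma)\to\infty$ as any singular value of $\bSigma$ tends to $0$. So this part of your argument is redundant rather than wrong.
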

Though the result suggests promising performance despite non-convexity, we discuss several limitations. First, to simplify mathematical analysis, it supposes an exact line search. This assumption can be relaxed at the expense of a more complicated proof. 
Second, although the algorithm invariably converges, the proposition cannot guarantee convergence to a global minimizer. \textcolor{black}{It simply says that a convergent subsequence exists whose limit is a stationary point $\bSigma$ satisfying 
\begin{equation} \label{stat_set}
\nabla g_\rho(\bSigma \mid \bSigma) = \bSigma^{-1} - \bSigma^{-1}\bS\bSigma^{-1}
+\rho(\bSigma-\bTheta) = {\bf 0}
\end{equation}
for some $\bTheta \in P_\mathcal{C}(\bSigma)$. As we expect, this stationarity condition is necessary for $\bSigma$ to furnish a global minimum. Indeed, if it fails, we take
$\bTheta \in P_\mathcal{C}(\bSigma)$ with $\nabla g_\rho(\bSigma \mid \bSigma) \ne {\bf 0}$. Then the negative gradient $-\nabla g_\rho(\bSigma \mid \bSigma)$ is a descent direction for $g_\rho(\bSigma \mid \bSigma)$, which majorizes $h_\rho(\bSigma)$. Hence, $-\nabla g_\rho(\bSigma \mid \bSigma)$ would also be a descent direction for $h_\rho(\bSigma)$, contradicting even  local optimality of $\bSigma$. Leveraging majorizing surrogates in this fashion establishes directional stationarity, the strongest kind of stationarity in semidifferentiable optimization \citep{pang2017computing,cui2018composite}, while avoiding  the complications that often come with checking the condition explicitly.}

\section{Empirical results}\label{sec:results}
\subsection{Simulation study}\label{sec:sim}

We illustrate the practical merits of our method on a suite of simulated examples. An open-source Julia implementation of the algorithm is available at the first author's website. 
In all examples, we initialize our algorithm from the diagonal matrix of sample variances. In practice, taking $\bSigma_0 = \bS$ leads to excessive backtracking in some runs. We initialize $\rho$ at $0.1$ and increase it by a factor of $1.2$ each iteration. Convergence is declared based on a relative tolerance of $10^{-6}$.

\begin{figure}[htbp]
\centering 
\hspace{-4pt}
\includegraphics[width = .32\textwidth]{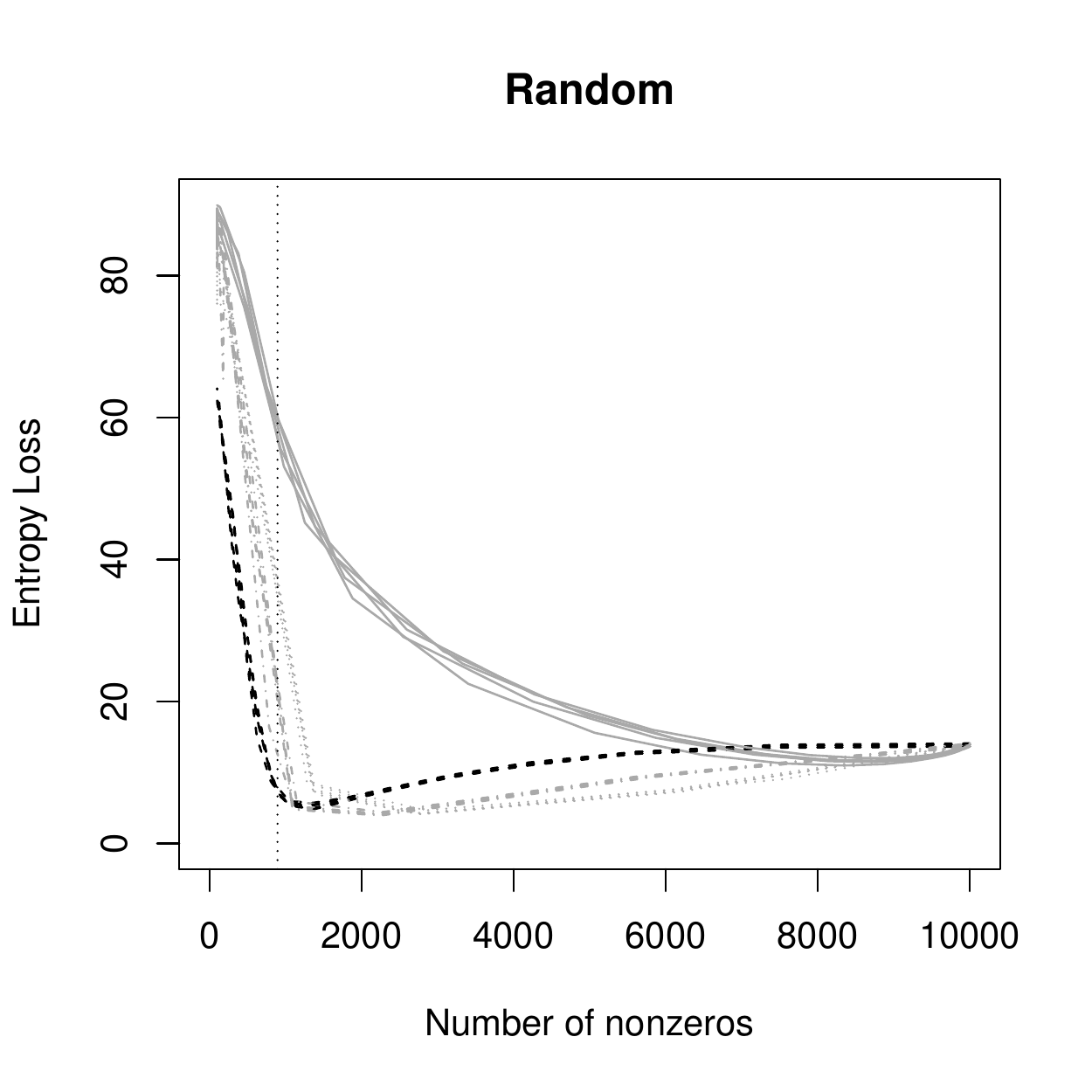} \hspace{-4pt}
\includegraphics[width = .32\textwidth]{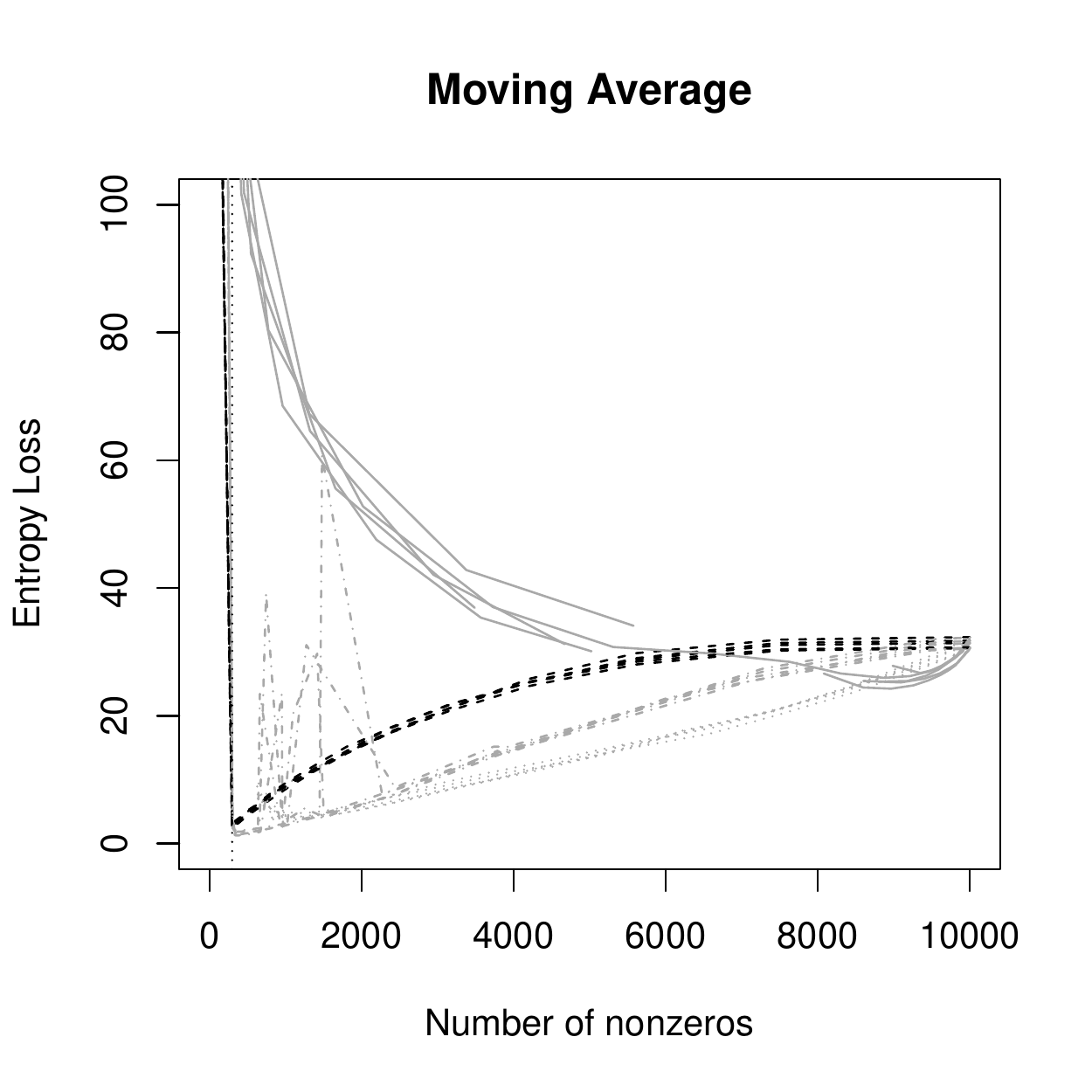} \hspace{-4pt}
\includegraphics[width = .32\textwidth]{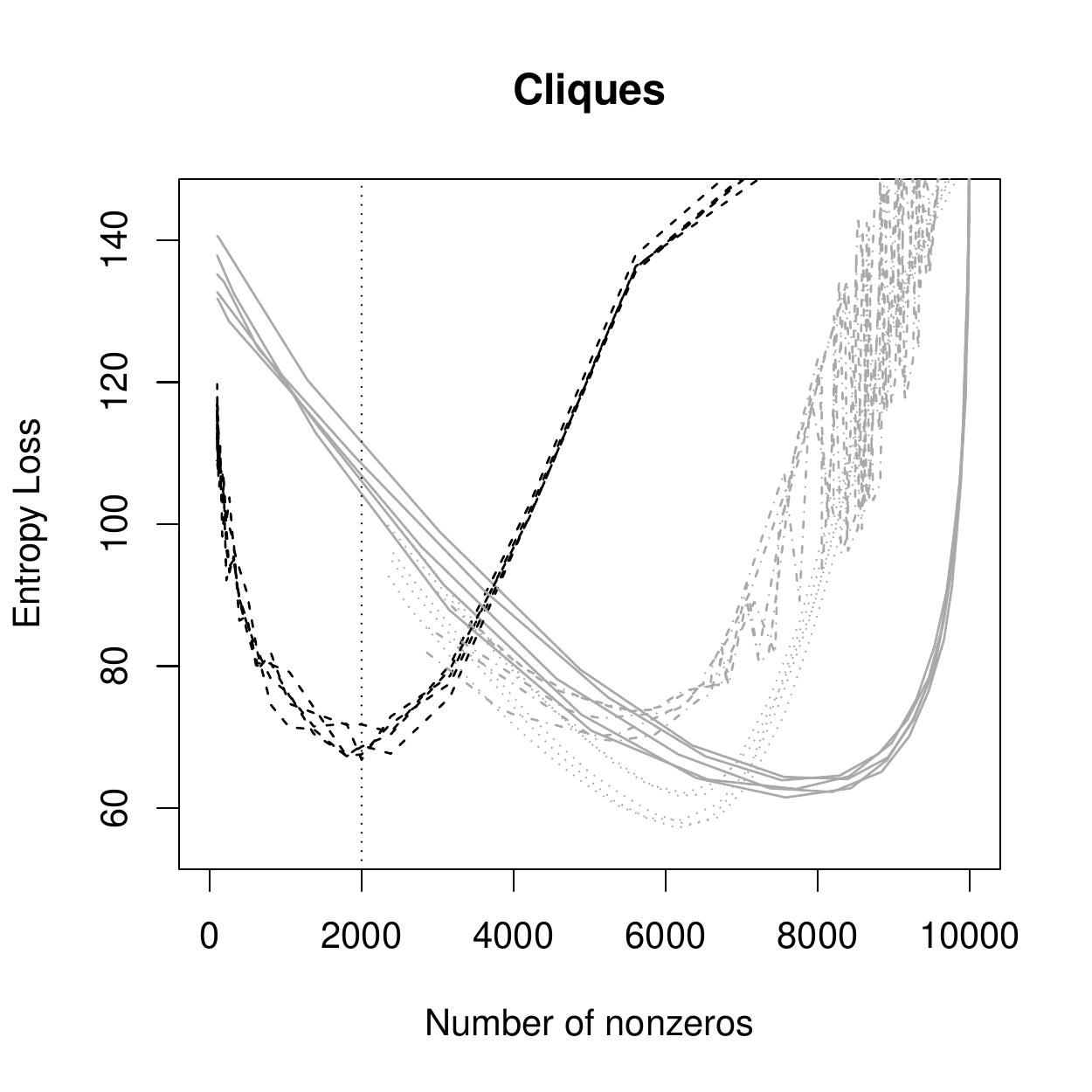} \hspace{-2pt}
\caption{Entropy loss of estimates under each method is plotted over $5$ repeat trials. The black dashed lines represent proximal distance results. Gray solid, dotted, and dot-dash lines represent thresholding, generalized gradient, and ``adaptive" generalized gradient respectively. Vertical line marks true number of nonzeros.} 
\label{fig:sim}
\end{figure}

Figure \ref{fig:sim} \textcolor{black}{summarizes  results for a synthetic data design taken from a study by \cite{bien2011}; further details also appear in the Appendix. Each of the three variants of the underlying model (independent, moving average, and cliques) exhibits $8 \%$ percent sparsity (nonzero entries) with $p > n$.} Following their analysis, we report performance as measured by the entropy loss $\tr (\bSigma^{-1} \hat\bSigma ) - \log \det ( \bSigma^{-1} \hat\bSigma ) - p $, and under receiver operating characteristic (ROC) curves \textcolor{black}{in the Appendix}. Previous authors recommend entropy loss as a measure when the covariance matrix is the primary object of interest \citep{huang2006covariance,levina2008}; note the role of $\bSigma$ in the entropy loss is analogous to how $\bOmega = \bSigma^{-1}$ enters the Kullback-Leibler loss.

Figure \ref{fig:sim} \textcolor{black}{shows a clear performance advantage of the} proximal distance algorithm that becomes more pronounced in the high-dimensional settings. In reproducing the results of \cite{bien2011}, we confirm that calls to the alternating directions method to enforce \textcolor{black}{positive definiteness} are relatively rare when $n>p$. However, this is not the case in the high-dimensional regime where the sample covariance \textcolor{black}{is not of} full rank. In our experience, numerical errors arise in switching between generalized gradient steps and alternating direction method corrections. \textcolor{black}{The results depicted in Figure \ref{fig:sim} reveal some of this instability over five random replicate trials}, most notably for the adaptive version of the generalized gradient which uses reciprocals of the entries in the sample covariance as weights in the penalty. At best, it is necessary to significantly reduce step size, resulting in \textcolor{black}{slower} progress. 

Next, we provide a detailed comparison to the soft and hard thresholding methods of \cite{bickel2008covariance} under the optimal thresholding \textcolor{black}{suggested by \cite{fang2016tuning}, as well as the penalized log-barrier method of \cite{rothman2012}}.  We omit generalized gradient descent \citep{bien2011} in this second  study due to \textcolor{black}{its excessive runtimes under cross-validation, and note that the method of \cite{xue2012} reports nearly identical performance to the penalized log-barrier method we consider}. We evaluate the entropy loss and root mean-squared error between $\bSigma$ and $\hat\bSigma$, and report false positive and false negative rates for identifying the nonzero entries in $\bSigma$. Results are presented in Tables \ref{tab:false} through \ref{tab:rmse}.

\begin{table}[htbp] \centering
\begin{tabular}{ l | c | c | c | c  }
p & Proximal Distance & Soft Threshold & Hard Threshold & Log Barrier \\ \hline
20 & 0.1 / 0.0 & 18.9 / 0.0 & 0.1 / 0.3 & 7.8 / 0.0 \\
30 & 0.2 / 0.6 & 12.4 / 0.0 & 0.2 / 2.7 & 5.8 / 0.0 \\
50 & 0.4 / 1.9 & 8.4 / 0.0 & 0.1 / 7.2 & 3.9 / 0.3 \\ 
100 & 0.5 / 17.8 & 4.4 / 8.5 & 0.1 / 42.0 & 3.2 / 9.7\\
200 & 1.0 / 42.4 & 4.3 / 33.2 & 0.0 / 79.7 &  1.1 / 50.1 \\ 
\end{tabular} 
\caption{Percentages of false positives
(on the left) and false negatives (on the right) over $50$ replications. All methods are tuned using $5$-fold cross validation. Standard errors are omitted; the largest standard error in column $1$ was $0.027$. }
\label{tab:false}
\end{table}

\begin{table}[htbp] \centering
\begin{tabular}{ l | c | c | c  | c }
$p$ & Proximal Distance & Soft Threshold  & Hard Threshold & Log Barrier \\ \hline
20 & 0.28 (0.09) & 0.94 (0.20) & $0.49$ (0.43) & 2.01 (0.6)\\
30 & 0.61 (0.27) & 2.35 (0.63) & $1.40$ (0.98) & 4.6 (0.9) \\ 
50 & 2.11 (0.81) & 6.24 (1.16) & $5.48$ (2.59) & 11.7 (1.2) \\
100 & 17.6 (3.3) & 28.7 (2.6) & $43.7$ (10.2) & 42.6 (4.1) \\
200 & 119.6 (6.3) & 140.3 (4.4) & $206.3$ (6.6) & 179.8 (5.1)
\end{tabular} 
\caption{Comparison of methods in terms of entropy loss over $50$ replications. All methods are tuned using $5$-fold cross validation. Standard errors are in parentheses. }
\label{tab:entropyloss}
\end{table}

\begin{table}[htbp] \centering
\begin{tabular}{ l | c | c | c | c }
$p$ & Proximal Distance  & Soft Threshold & Hard Threshold & Log Barrier \\ \hline
20 & 0.050 (0.011) & 0.092 (0.012) & 0.062 (0.015) & 0.078 (0.011) \\
30 & 0.061 (0.012) & 0.102 (0.009) & 0.073 (0.016) & 0.085 (0.008) \\
50 & 0.081 (0.011) & 0.096 (0.006) & 0.079 (0.010) & 0.080 (0.008) \\
100 & 0.118 (0.008) & 0.124 (0.005) & 0.128 (0.007) & 0.118 (0.004) \\
200 & 0.141 (0.005) & 0.148 (0.002) & 0.150 (0.001) & 0.143 (0.002) \\ 
\end{tabular} 
\caption{Comparison of methods in terms of root mean-square error over $50$ replications. All methods are tuned using $5$-fold cross validation. Standard errors are in parentheses. }
\label{tab:rmse}
\end{table}

 We vary the number of features from $p=20$ to $200$. Under each setting, $50$ replicate datasets of size $n=100$ are generated using a true covariance matrix with $2\%$ sparsity. \textcolor{black}{We remark that when $k$ is known, it can be directly specified in our method. In contrast hyper-parameter tuning remains necessary with known $k$ under shrinkage penalties. Nonetheless, we select $k$ as well as the tuning constants of competing methods by $5$-fold cross-validation to allow a generous comparison, with complete details in the Appendix}. Table \ref{tab:entropyloss} shows that the proximal distance algorithm achieves lower average entropy loss than thresholding and the log-barrier penalized method, a trend that is also conveyed \textcolor{black}{by the root mean-squared error comparisons in Table \ref{tab:rmse}.}
 
 Table \ref{tab:false} shows that hard thresholding typically offers the lowest false positive rates,  often at the expense of an alarmingly high false negative rate. In contrast, our method offers a comparable false positive rate while introducing strikingly  fewer false negatives. As expected, soft thresholding introduces many false positives in all cases. \textcolor{black}{The log-barrier penalized approach shows a qualitatively similar trend to soft thresholding, but tends to strike a better balance, exhibiting a noticeably lower false positive rate at the cost of a minor increase in false negatives. As we increase $p$, both soft thresholding and the log-barrier penalized method begin to suffer a comparable false negative rate to the proximal distance algorithm.} It is notable that even when $p$ is small, the existing methods introduce a nontrivial number of either false positives or false negatives, while the proximal distance algorithm can maintain a low rate on both fronts. \textcolor{black}{Finally, it is worth noting that while our proposed method is a non-convex formulation, we did not observe the algorithm stopping short at local minima. For a fixed synthetic dataset, perturbing the initial guess over twenty trials consistently delivered the same optimum. In the results reported above, we run one instance of each algorithm per simulated dataset. Taking the best of several random restarts would only result in more favorable performance of the proximal distance method in the possibility that it converged to inferior local optima in some trials.}
 
\textcolor{black}{While we omit a detailed runtime comparison due to differences in implementations across programming languages, we report average runtimes of our proposed method as $p$ increases beyond the scope of the previous simulations. Figure \ref{fig:runtime} reveals that for the largest case we consider with $p=5000$, in which there are tens of millions of free parameters under the pattern-less sparsity assumption, the problem remains tractable with a runtime of under two and a half hours on a standard laptop. Most settings complete in seconds, and the right panel shows that the runtime scales roughly as $p^3$. In contrast, \cite{xue2012} report that the log-barrier method becomes unwieldy for $p>200$, while existing likelihood-based methods such as the generalized gradient method in the first simulation study are even slower by a large margin.}
\begin{figure}[htbp]
\centering 
\includegraphics[width = .48\textwidth]{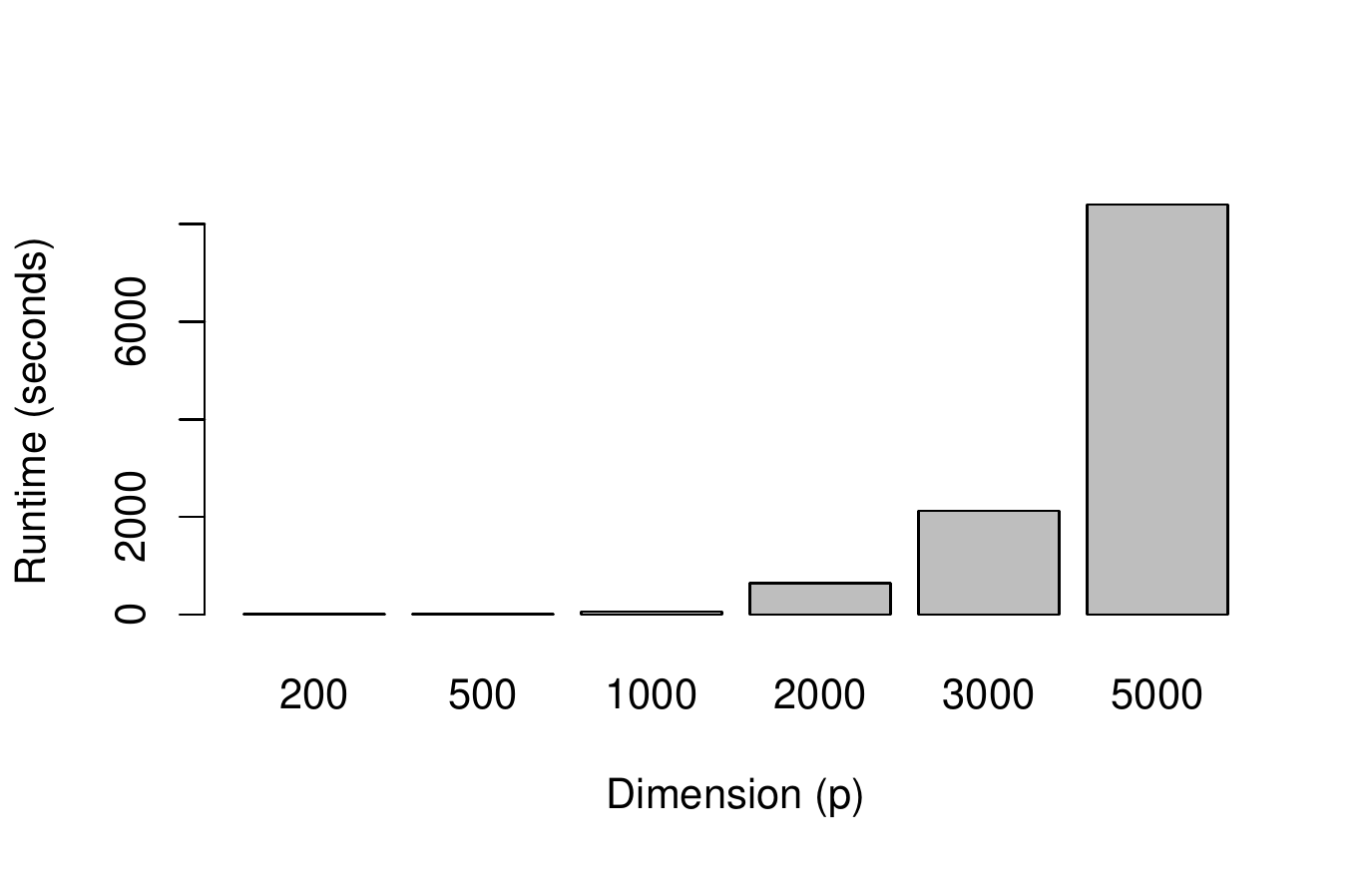}
\includegraphics[width = .51\textwidth]{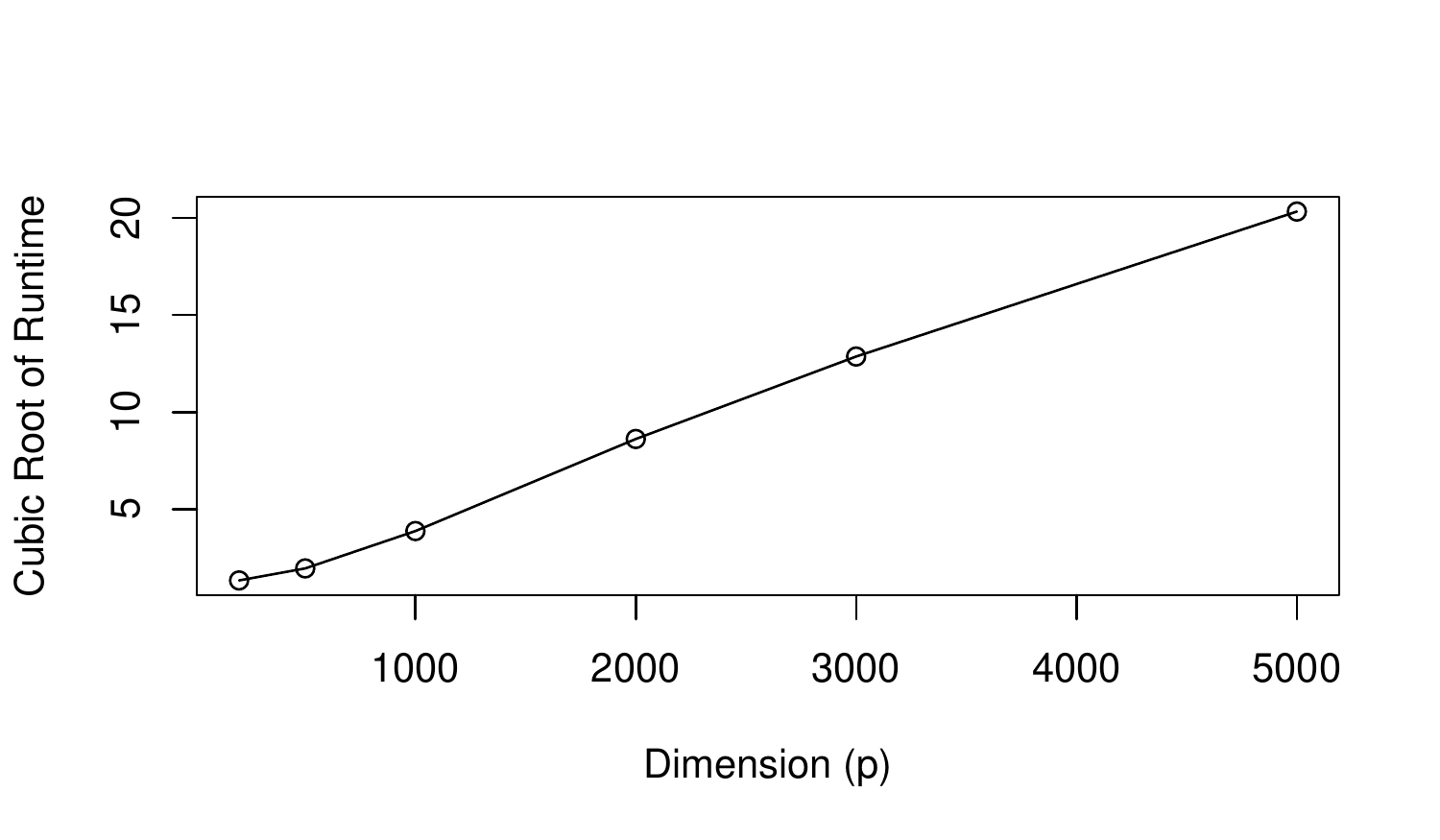} 
\caption{Runtime of  proximal distance algorithm as dimension grows.} 
\label{fig:runtime}
\end{figure}

\subsection{International Migration Data}\label{sec:migration}
Projecting international migration at the country-specific scale is important in shaping policy decisions that arise in social welfare and economic planning. Probabilistic projections are desirable in quantifying uncertainty in these ``barely predictable" global processes \citep{bijak2010}. Existing global models typically assume that forecast errors are uncorrelated across countries. Although modeling under the independence assumption may be well-calibrated for individual countries, ignoring correlations will yield under or over-estimates in projections. 

We consider international migration forecast data from the United Nations World Population Prospects (UN WPP) division. The dataset consist of net migration estimates every five years in each country from 1950 to 2010.  Following \cite{azose2016}, our goal is to estimate the correlation structure among forecast errors. The observations $\epsilon_t$ $(t=1, \ldots, 11)$ are residual vectors from an AR(1) model for net migration between all countries; the $\epsilon_t$ are assumed to be independent and identically distributed according to a multivariate normal distribution. We base inference on a small available sample \textcolor{black}{of $n=11$ measurements, seeking to estimate a correlation matrix $R$ with roughly $18,000$ entries generated by $p=191$ country pairs}. The Pearson sample correlation is known to degrade in such settings and suggest spurious correlations. \cite{azose2016} consider a Bayesian model that shrinks \textit{a priori} untrustworthy elements toward zero. This is achieved by penalizing country pairs that are far apart, do not share a colonial relationship, or occur in different regions. These penalties reflect the UN World Population Prospects partition of the globe into 22 regions based on geographical and cultural affinity. The authors employ a slight modification of the approach proposed by \cite{bien2011} to extract maximum \textit{a posteriori}  estimates.  
\cite{azose2016}  note that the method is slow on a problem of this size and renders cross-validation infeasible, instead choosing the $\ell_1$-penalty parameter $\lambda$ according to a manual heuristic. %
\begin{figure}[htbp]
\centering 
\hspace{-20pt}\includegraphics[width = 1.04\textwidth]{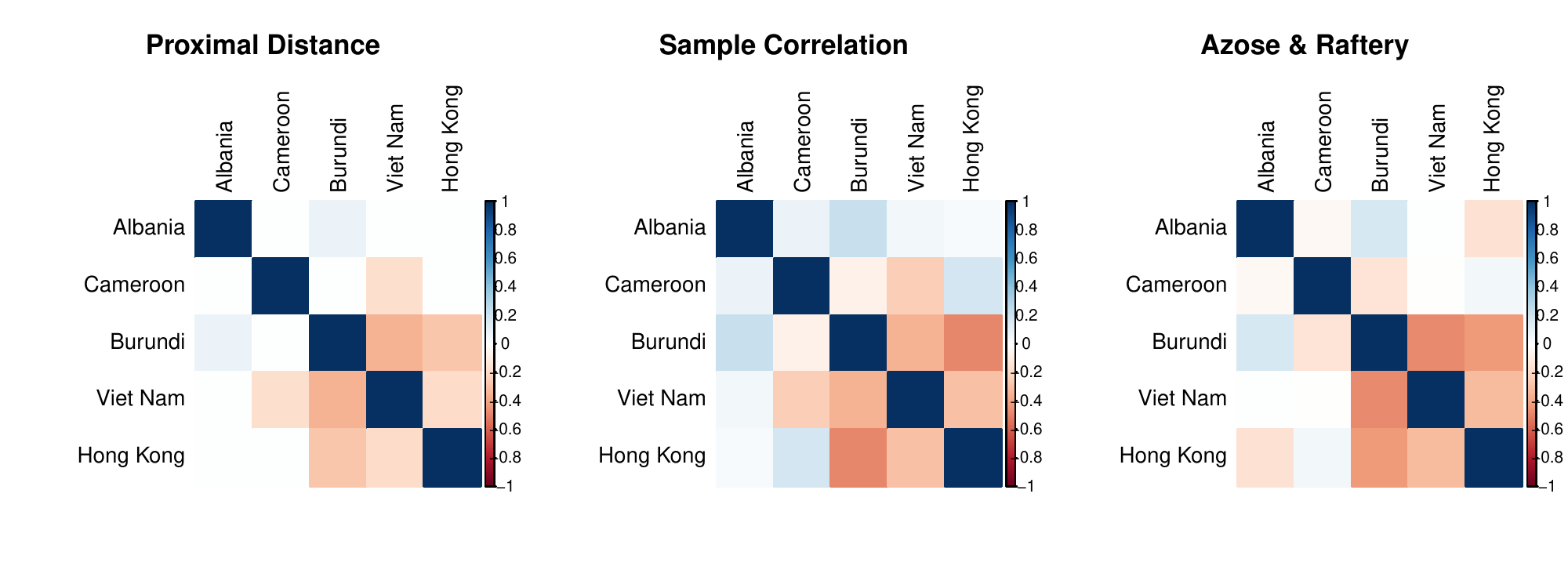}
\caption{Estimated correlations on a random subset of countries. The proximal distance algorithm results in a sensible sparsity pattern based on criteria suggested by \cite{azose2016}. Though not visibly obvious, their maximum \textit{a posteriori} estimate produces no sparse entries on this subset.} 
\label{fig:corrs}
\end{figure}

\textcolor{black}{In contrast, the analogous study using $5$-fold cross validation with the proximal distance method completes in under a minute on a standard laptop computer}. Our estimates deliver sparser solutions than the estimates of \cite{azose2016} under an $\ell_1$ penalty; we record $6145$ zeroes versus their $323$ zeroes. To illustrate the difference in estimates, we consider a random subset of five countries.  Despite the absence of prior knowledge, the proximal distance method reveals relationships that are qualitatively consistent with the criteria used by \cite{azose2016} in the design of their prior. As apparent in Figure \ref{fig:corrs}, the zero entries estimated under the proximal distance method correspond to country pairs that occur in different blocks of the UN partition. In contrast, the method of \cite{azose2016} produces small but nonzero entries for these pairs and, in general, does not produce a sparse solution. 
\begin{table}[htbp] \centering
\begin{tabular}{ l | c | c | c | c   }
 & $\hat\Ell $ & EBIC & BIC & AIC  \\ \hline
 Bayesian shrinkage & $\bf{-572.1}$  & $43208.1$ & 41591.1 & 34499.8 \\ 
Proximal distance & $-341.7$ & \bf 39701.3 & \bf 28091.3 & \bf 23216.6 
\end{tabular} 
\caption{ Comparison of the maximum a posteriori estimate by the Bayesian shrinkage approach of \cite{azose2016} and the proposed proximal distance estimate in terms of negative log-likelihood, extended BIC, BIC, and AIC. }
\label{tab:model}
\end{table}
To compare the quality of estimates quantitatively, we advocate the extended Bayesian information criterion (EBIC). Although this criterion tends to be more suitable in high-dimensional settings, we report in Table \ref{tab:model} the standard Akaike Information Criterion (AIC) and Bayesian Information Criterion (BIC) measures for completeness. As anticipated, the denser estimate of \cite{azose2016} achieves a lower negative log-likelihood on the data, but the measures accounting for model complexity favor our sparse solution. Given the limited amount of data, we hesitate to conclude that our estimate is definitively preferable. Indeed, the use of sensible prior knowledge in such a setting is prudent. Despite ignoring
\textit{a priori} information, it is noteworthy that our method is competitive with an ostensibly more tailored approach to the data at hand.

\begin{figure}[htbp]
\centering 
\includegraphics[width = 1.01\textwidth]{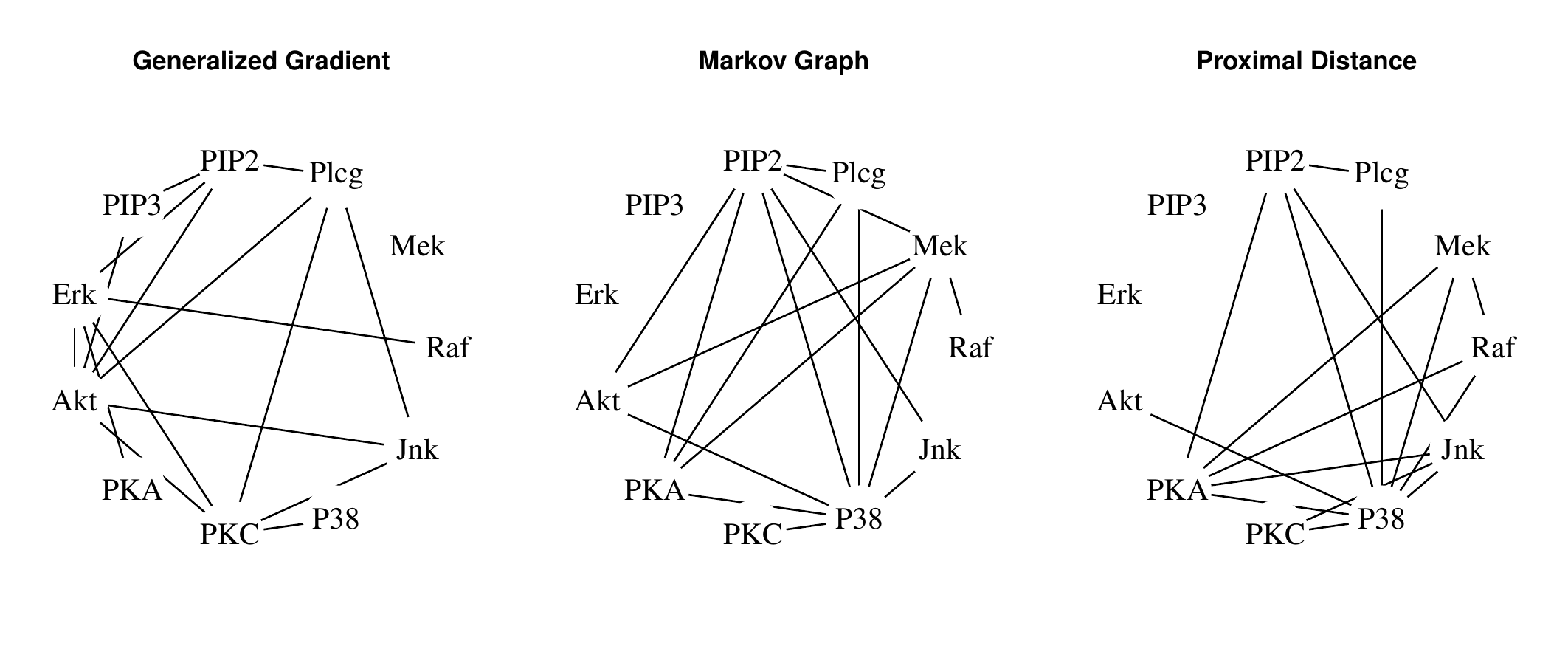} 
 \includegraphics[width = 1.01\textwidth]{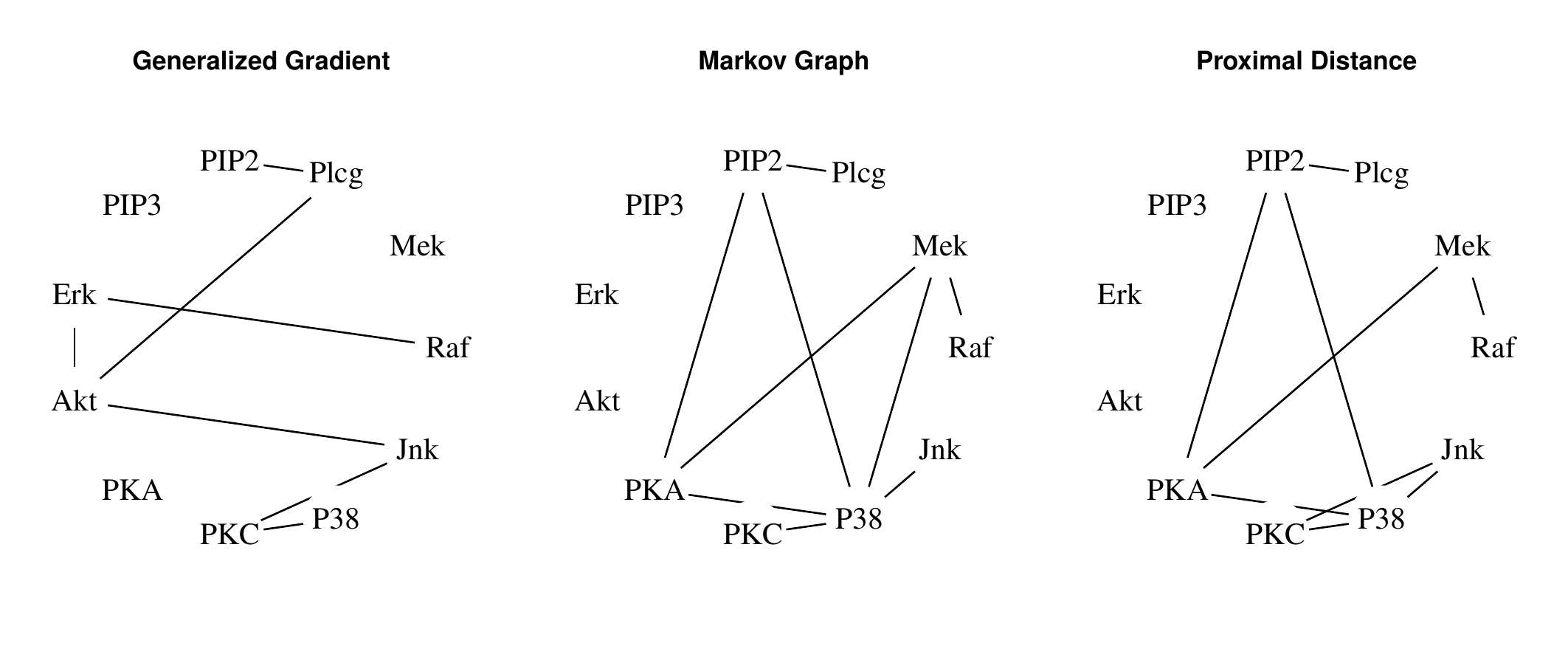}
\caption{Estimated covariance graphs under generalized gradient descent (left) and the proximal distance algorithm (right), compared to the generalized gradient estimate of the Markov graph (center). Top and bottom panels display two settings in which sparsity levels are matched between methods. In each case, the proximal distance algorithm produces more edges in common with the Markov graph.} 
\label{fig:cell}
\end{figure}

\subsection{Flow Cytometry data}\label{sec:cell}
Our final case study takes a closer look at the marginal and conditional dependency structures in a classic cell signaling study. We revisit the experiment studied by \cite{sachs2005} involving flow cytometry measurements on $p=11$ proteins and $n=7466$ cells. This dataset was previously analyzed in the original graphical lasso paper \citep{friedman2007} and in a study of $\ell_1$-penalized covariance estimation \citep{bien2011}. We produce two estimates of the conditional dependency or Markov graph using the graphical lasso with $k=9$ and $k=16$ edges. \cite{bien2011} apply their method to estimate the marginal dependency graph, which does not coincide with estimates of the Markov graph at matched sparsity levels. This is no surprise since the underlying models offer distinct interpretations.  A missing edge in the covariance graph tells us that the concentration of one protein gives no information about the concentration of the  other, whereas a missing edge in the Markov graph indicates that the concentration of one protein gives no information about the concentration of the other \textit{conditional} on all other concentrations. While this difference is crucial, our results suggest that the covariance graph may be more similar to the Markov graph than past studies based on $\ell_1$ penalties suggest.

Figure \ref{fig:cell} displays covariance graphs obtained by running the generalized gradient descent algorithm of \cite{bien2011} and our proximal distance algorithm at sparsity levels matched to the Markov graphs. It is visually clear that our estimate of $\widehat\Sigma$ shares more edges with the Markov graph. Although the true covariance graph and Markov graph do not necessarily coincide, these results suggest that the difference between the two in these data may be overstated due to $\ell_1$ shrinkage or convergence to a poor local minimum under generalized gradient descent. It is again difficult to produce a complete range of sparsity levels under an $\ell_1$ penalty. For instance, the generalized gradient estimate on the bottom row of Figure \ref{fig:cell} features one fewer edge than desired, though it yields the closest sparsity estimate before transitioning to $k=11$ edges over a grid search of mesh size $10^{-7}$ for the penalty constants. Even in the extreme case (not pictured) where penalty constants are chosen to yield only $k=1$ edge, the proximal distance algorithm and graphical lasso agree in producing the edge Mek---Raf, while the generalized gradient algorithm selects the sole edge Erk---Akt.  Once again we see that the proximal distance algorithm allows us to directly specify the sparsity level $k$, while $\ell_1$ penalization require tedious calibration to match the penalty constant $\lambda$ to $k$. This compact example  emphasizes both the computational advantages of the proximal distance algorithm and its ability to deliver dependable solutions uncontaminated by excess shrinkage.

\section{Discussion}
We propose a novel proximal distance method for estimating a sparse covariance matrix that does not appeal to global shrinkage or convex relaxation, and does not assume any known structure in the sparsity pattern. Our technique makes use of a natural and interpretable penalty based on the distance to the constraint set. We justify theoretically and showcase empirically the merits of the method. Building upon landmark work on this difficult non-convex problem by \cite{bien2011}, our contribution extends their core idea centered around majorization-minimization, but relies on a more favorable approximation that hugs the loss function more tightly at each iterate. We also employ a surprising solution method by drawing from ideas in the control theory literature. These changes substantially improve the stability, speed, and accuracy of sparse covariance matrix estimation.

In our proximal distance algorithm it is hard to avoid a computational complexity of $\mathcal{O}(p^3)$. For instance, formation of the left-hand side of equation \eqref{eq:sylvester} requires dense matrix inversion and multiplication.  One could possibly solve equation \eqref{eq:sylvester} by an iterative algorithm rather than the \cite{bartels1972} algorithm. 
For instance, the iteration scheme \[\bTheta_{j+1} = \frac{1}{\rho} \bD_k - \frac{1}{\rho} \bSigma_k^{-1} \bTheta_j \bSigma_k^{-1}\] converges
to $\bSigma_{k+1}$ provided that $\lVert \bSigma_k^{-1}\rVert_2^2 < \rho$.

We note that previous convergence results for proximal distance algorithms do not address non-convex sets. Though our theory handles the non-convex sparsity set $\mathcal{C}$, we fix the penalty constant $\rho$ in our analysis. This simplification is justified if we gradually increase $\rho$ and then fix its value, though there remain gaps that warrant further theoretical development of proximal distance algorithms. 
For example, how large should one take the resting value of $\rho$, and how quickly should one increment $\rho$ from its  initial value? That na\"ively using the same update schedule for $\rho$ works well empirically across the board should be considered an advantage. Nevertheless, a closer analysis of this behavior would be fruitful, and potentially crucial in other applications. 
 
Despite these gaps, the desirable theoretical properties and empirical prowess \textcolor{black}{of the proposed proximal distance algorithm suggest that the ideas} are applicable to a broad range of problems. The virtues of the principle include  versatility and elegance, avoidance of shrinkage, and ease of coding. Projection onto a closed set undergirds the principle. Fortunately, many projection operators are available in the literature, even for non-convex sets \citep{bauschke2011convex,beck2017}. \textcolor{black}{These successes encourage future work extending penalized likelihood methods for covariance estimation in the patternless sparsity setting. For instance in related problems, the local linear approximation algorithm succeeds in applying majorization-minimization for sparse estimation under alternate non-convex penalties such as SCAD \citep{zou2008one}. Exploring the extent to which our contributions can help tailor such approaches to sparse covariance estimation is noontrivial, and provides a fruitful avenue for future work.} We invite readers to help us advance proximal distance theory and devise their own applications of this valuable extension of the majorization-minimization principle.

\appendix
\section{Appendix \label{appendix1}}
\subsection*{Proof of Proposition \ref{step_halving_prop}}
It suffices to show that a small enough step size $s$ decreases $g_\rho(\bSigma \mid \bSigma_k)$. \textcolor{black}{Recall the form \eqref{eq:newton} which expresses $\widehat\bSigma = \bSigma_k -\bH_k^{-1} \nabla q_\rho(\bSigma_k \mid \bSigma_k)$, 
where $\bH_k$ is the scoring approximation obtained by taking the expected value of the second differential $d^2g_\rho(\bSigma \mid \bSigma_k)$}. Here we explicitly avoid writing $\bH_k$ as an unwieldy tensor, instead noting that it generates the positive definite quadratic form $\tr ( \bSigma_k^{-1} \bV \bSigma_k^{-1} \bV )$. In light of the identity $\nabla q_\rho(\bSigma_k \mid \bSigma_k)= \nabla g_\rho(\bSigma_k \mid \bSigma_k)$, the vector $\bv_k$ is a descent direction for $g_\rho(\bSigma \mid \bSigma_k)$ at $\bSigma_k$. Since  the cone of positive definite matrices is open, step-halving is also guaranteed to keep $\bSigma_{k+1}$ positive definite.

\subsection*{Proof of Proposition \ref{thm:convergence}}

To establish convergence, we invoke Zangwill's Global Convergence Theorem for descent algorithms \citep{luenberger1984,zangwill}. Recall that our algorithm map $A(\bSigma)$ may be set-valued because the projection $P_\mathcal{C}(\bSigma)$ onto the sparsity constraint set can be multi-valued. Denote the set of stationary points \eqref{stat_set}
of $A(\bSigma)$ by $\Gamma$. The theorem statement is reproduced in our notation below for convenience:

\begin{theorem}\label{thm:zangwill}
(Global Convergence Theorem) Consider the algorithm $A: X \rightarrow \mathcal{P}(X)$ defined by a point-to-set map and an initial point $\bSigma_0$. Let $\Gamma \subset X$ be a solution set and 
$\bSigma_{k+1} \in A(\bSigma_k)$ be a sequence generated by $A(\bSigma)$. Finally, assume that  
\begin{itemize}
\item[i)] all iterates $\bSigma_k$ are contained in a compact set $S \subset X$,
\item[ii)] there is a continuous function $h(\bSigma)$ such that
\begin{itemize}
\item[a)] if $\bSigma \notin \Gamma$, then $h(\bTheta) < h(\bSigma)$ for all $\bTheta \in A(\bSigma)$,
\item[b)] if $\bSigma \in \Gamma$, then $h(\bTheta) \leq h(\bSigma)$ for all $\bTheta \in A(\bSigma)$,
\end{itemize}
\item[iii)] the mapping $A$ is closed at points outside of $\Gamma$.
\end{itemize}
Then the sequence $\bSigma_k$ possesses convergent subsequences, and the corresponding limits belong to the solution set.
\end{theorem}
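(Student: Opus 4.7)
The plan is to apply the classical Bolzano--Weierstrass compactness argument and then rule out non-stationary limits by a contradiction that pits strict descent (hypothesis (ii-a)) against the closedness of $A$ (hypothesis (iii)). Existence of convergent subsequences is immediate from (i): since all $\bSigma_k$ lie in the compact set $S$, a convergent subsequence $\bSigma_{k_j} \to \bSigma^\ast \in S$ exists. The substantive part is showing that any such limit $\bSigma^\ast$ lies in $\Gamma$.

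Before the contradiction, I would observe that (ii-a) and (ii-b) together with continuity of $h$ imply that the scalar sequence $\{h(\bSigma_k)\}$ is non-increasing regardless of whether each $\bSigma_k$ lies in or outside $\Gamma$, and is bounded below because $h$ is continuous on the compact set $S$. Hence $h(\bSigma_k)$ converges to some real limit $h^\ast$, and by continuity $h(\bSigma^\ast) = h^\ast$ for every subsequential limit $\bSigma^\ast$ of the full sequence.

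Now suppose for contradiction that some subsequential limit $\bSigma^\ast$ along $\bSigma_{k_j}$ lies outside $\Gamma$. The successor sequence $\bSigma_{k_j+1}$ still lies in $S$, so by passing to a further subsequence (and relabeling indices) I may assume $\bSigma_{k_j+1} \to \bTheta^\ast$ for some $\bTheta^\ast \in S$. Closedness of $A$ at $\bSigma^\ast$, applied to the sequences $\bSigma_{k_j} \to \bSigma^\ast$ and $\bSigma_{k_j+1} \to \bTheta^\ast$ with $\bSigma_{k_j+1} \in A(\bSigma_{k_j})$, delivers $\bTheta^\ast \in A(\bSigma^\ast)$. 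Then (ii-a) forces $h(\bTheta^\ast) < h(\bSigma^\ast) = h^\ast$. On the other hand, continuity of $h$ combined with $\bSigma_{k_j+1} \to \bTheta^\ast$ gives $h(\bSigma_{k_j+1}) \to h(\bTheta^\ast)$, and since the full sequence $\{h(\bSigma_k)\}$ converges to $h^\ast$, every subsequence must share this limit, so $h(\bTheta^\ast) = h^\ast$. This contradicts the strict inequality, and hence $\bSigma^\ast \in \Gamma$.

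The main obstacle is handling the set-valued nature of $A$ correctly. The notion of closedness I am invoking is the standard point-to-set generalization: $A$ is closed at $\bx$ if $\bx_k \to \bx$, $\by_k \in A(\bx_k)$, and $\by_k \to \by$ together imply $\by \in A(\bx)$. The argument relies on a \emph{double extraction}, first a subsequence along which the iterates themselves converge, then a further subsequence along which the one-step successors converge; only after both extractions does closedness yield $\bTheta^\ast \in A(\bSigma^\ast)$. A secondary subtlety is that hypothesis (ii-b) is indispensable: without it, monotonicity of $h(\bSigma_k)$ could fail at iterates lying in $\Gamma$, and the scalar convergence that drives the contradiction would collapse. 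Once these two technical points are in place, the remainder reduces to standard compactness and continuity bookkeeping.
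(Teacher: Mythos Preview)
Your proof is correct and follows the standard textbook argument (as in Luenberger or Zangwill). Note, however, that the paper does not actually prove this theorem: it is a classical result reproduced verbatim from the cited references \citep{luenberger1984,zangwill} and merely invoked as a tool in the proof of Theorem~\ref{thm:convergence}. There is therefore no paper-supplied proof to compare against; your argument supplies exactly the omitted classical details and would serve well as a self-contained justification.
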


We begin by proving the coercivity of $h_\rho(\bSigma)$, which will imply that the sequence $\bSigma_k$ is contained in a compact set. Note  even if $\bS$ is singular, running our method instead on $\tilde\bS = \bS + \delta \bI$ for arbitrarily small $\delta$ suffices for the theory to hold. Doing so is reasonable as it is a strictly weaker assumption than relaxing the entire constraint $\bSigma \succ {\bf 0}$ to the set $\bSigma \succeq \delta \bI$ \citep{bien2011}.

\begin{lemma}\label{lem:coercive}
The objective function $h_\rho(\bSigma)$ of our model is coercive whenever the sample covariance matrix $\bS$ is nonsingular.
\end{lemma}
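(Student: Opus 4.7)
The plan is to reduce the problem to analyzing only the log-likelihood piece $f(\bSigma) = \ln\det\bSigma + \tr(\bSigma^{-1}\bS)$ on the interior of the positive definite cone, since the distance term $\frac{\rho}{2}\dist(\bSigma,\mathcal{C})^2$ is nonnegative and hence $h_\rho(\bSigma)\ge f(\bSigma)$. Viewing $h_\rho$ as $+\infty$ outside the cone (consistent with the convention stated below equation \eqref{eq:ll}), any lower bound on $f$ that diverges with $\lVert\bSigma\rVert$ establishes coercivity of $h_\rho$.

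The main tool is a pointwise lower bound that leverages the assumed invertibility of $\bS$. Set $\mu = \lambda_{\min}(\bS) > 0$, which is available precisely because $\bS$ is nonsingular and positive semidefinite as a sample covariance. Then $\bS \succeq \mu\bI$, so
\[
\tr(\bSigma^{-1}\bS) \;=\; \tr\bigl(\bSigma^{-1/2}\bS\,\bSigma^{-1/2}\bigr) \;\ge\; \mu\,\tr(\bSigma^{-1}).
\]
Diagonalizing $\bSigma$ with eigenvalues $\lambda_1 \ge \cdots \ge \lambda_p > 0$ recasts this as
\[
f(\bSigma) \;\ge\; \sum_{i=1}^p \Bigl(\ln\lambda_i + \frac{\mu}{\lambda_i}\Bigr) \;=\; \sum_{i=1}^p g(\lambda_i),
\]
where $g(\lambda) = \ln\lambda + \mu/\lambda$ is smooth on $(0,\infty)$.

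Elementary calculus shows $g$ attains a global minimum of $\ln\mu + 1$ at $\lambda = \mu$ and satisfies $g(\lambda)\to\infty$ both as $\lambda\to 0^+$ and as $\lambda\to\infty$. Thus $g(\lambda)\ge \ln\mu+1$ uniformly, and in particular
\[
f(\bSigma) \;\ge\; g(\lambda_1) + (p-1)(\ln\mu + 1).
\]
To finish, consider any sequence of symmetric matrices with $\lVert\bSigma_k\rVert_F \to \infty$. Indices at which $\bSigma_k$ fails to be positive definite contribute $h_\rho(\bSigma_k) = +\infty$ automatically, so we may restrict to positive definite iterates, for which $\lambda_1(\bSigma_k)\to\infty$. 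The bound above combined with $g(\lambda_1)\ge \ln\lambda_1$ (for $\lambda_1$ large) gives $f(\bSigma_k)\to\infty$, and therefore $h_\rho(\bSigma_k) \to \infty$, which is coercivity.

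There is no serious obstacle here; the only subtlety worth flagging is that the same eigenvalue analysis also shows $h_\rho(\bSigma)\to\infty$ as $\bSigma$ approaches the boundary of the positive definite cone (since some $\lambda_i\to 0^+$ forces $g(\lambda_i)\to\infty$). This stronger property, while not needed for coercivity per se, is what keeps the iterates bounded away from singularity in Theorem \ref{thm:convergence}, and it comes for free from the same bound.
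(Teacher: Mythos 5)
Your proof is correct and follows the same overall strategy as the paper: drop the nonnegative penalty, diagonalize $\bSigma$, and bound $f$ below by a sum of coercive univariate functions of the form $\ln\lambda + c/\lambda$, each with minimum $\ln c + 1$ at $\lambda = c$. The one genuine difference is how the trace term is handled. The paper invokes the Von Neumann--Fan trace inequality to obtain $\tr(\bSigma^{-1}\bS) \ge \sum_i s_i/\sigma_i$, which pairs each eigenvalue of $\bSigma$ with its counterpart in $\bS$; you instead use the cruder but more elementary bound $\bS \succeq \lambda_{\min}(\bS)\bI$ together with conjugation-invariance of the Loewner order and monotonicity of the trace, yielding $\tr(\bSigma^{-1}\bS) \ge \lambda_{\min}(\bS)\,\tr(\bSigma^{-1})$. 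This sacrifices a tighter constant for simplicity and avoids citing a named rearrangement inequality, which is a reasonable trade since coercivity needs no tightness. Your closing remark on boundary behavior (some $\lambda_i \to 0^+$ forces $g(\lambda_i)\to\infty$) is exactly the observation the paper makes immediately after its proof to justify extending $h_\rho$ continuously by $+\infty$ off the positive definite cone, so you have captured that subtlety as well; just note that for the compact-sublevel-set conclusion used in Theorem~\ref{thm:convergence} you do need the full sum bound $\sum_i g(\lambda_i)$ rather than the single-eigenvalue bound $g(\lambda_1) + (p-1)(\ln\mu+1)$ you isolate midway through, since the boundary case is about the smallest eigenvalue, not the largest.
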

\begin{proof} Since the penalty is nonnegative, it suffices to prove that $f(\bSigma)=\ln \det \bSigma + \text{tr}(\bSigma^{-1}S)$ is coercive. Let the singular values of $\bSigma$ be denoted $\sigma_1 \geq \sigma_2 \geq \ldots > 0$, and let the singular values of $\bS$ be denoted $s_1 \geq s_2 \geq \ldots >0$. It is clear that $\lVert \bSigma \rVert \rightarrow \infty$ if and only if at least one $\sigma_i\rightarrow \infty$ and that $\lVert \bSigma^{-1} \rVert \rightarrow \infty$ if and only if at least one $\sigma_i \rightarrow 0$. The matrix analogue of the Cauchy-Schwarz inequality due to Von Neumann and Fan tells us that $\text{tr}(\bSigma^{-1}\bS) \geq \sum_i s_i/\sigma_i$. We also have $\ln \det \bSigma = \sum_i \ln \sigma_i$. Now consider the sum $r(\bsigma)=\sum_i (\ln \sigma_i + s_i/\sigma_i)$, which bounds $f(\bSigma)$ below. Since each summand satisfies $$ \min_{\sigma_i} \Big(\ln \sigma_i + \frac{s_i}{\sigma_i}\Big) \geq \ln s_i + 1,$$ $r(\bsigma)$ obviously tends to $\infty$ if and only if any $\sigma_i$ tends to $0$ or $\infty$. Equivalently,
$f(\bSigma)$ tends to $\infty$ if and only if either $\lVert \bSigma \rVert$ or $\lVert \bSigma^{-1} \rVert$ tends to $\infty$.
\end{proof}

This proof shows that if we set $h_\rho(\bSigma) = \infty$ where $\bSigma$ fails to be positive definite, then $h_\rho(\bSigma)$ is continuous. We will adopt this convention in defining the update
$\bSigma_{k+1} = \bSigma_k + \eta_k \bv_k \in A(\bSigma_k)$ via the
choice
$$\eta_k = \argmin_{\eta \in [0,1]} g_\rho(\bSigma_k + \eta \bv_k \mid \bSigma_k).$$ 
Before proving the next lemma, recall that the surrogate $q_\rho(\bSigma \mid \bSigma_k)$ is minimized by $\widehat\bSigma~=~\bSigma_k~+~\bv_k$, where $\bv_k = -\bH_k^{-1} \nabla q_\rho(\bSigma_k \mid \bSigma_k)$ and $\bH_k$ is the approximate second differential generating the quadratic form 
$\bV \mapsto \tr ( \bSigma_k^{-1} \bV \bSigma_k^{-1} \bV )$. 
Elements of the solution set $\Gamma$ of Zangwill's theorem are characterized by the stationarity condition \eqref{stat_set} for some $\bTheta \in P_\mathcal{C}(\bSigma)$. 

\begin{lemma}\label{lem:zangwill}
Some point $\bTheta \in A(\bSigma)$ decreases our objective $h_\rho(\bSigma)$ and strictly so
when $\bSigma \notin \Gamma$. Furthermore, the algorithm map $A(\bSigma)$ remains within a compact set and is closed outside $\Gamma$.
\end{lemma}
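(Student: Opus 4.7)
My plan is to verify in turn the three hypotheses of Zangwill's theorem: a descent condition, compactness of iterates, and closedness of the algorithm map outside $\Gamma$.

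For descent, I would proceed as in the proof of Proposition \ref{step_halving_prop}. The direction $\bv_k = -\bH_k^{-1}\nabla q_\rho(\bSigma_k \mid \bSigma_k)$ is a descent direction for the surrogate $g_\rho(\cdot \mid \bSigma_k)$ at $\bSigma_k$, because $\bH_k$ generates a positive definite quadratic form and because the Taylor-type construction makes $\nabla q_\rho(\bSigma_k \mid \bSigma_k) = \nabla g_\rho(\bSigma_k \mid \bSigma_k)$ at the tangency point. For any projection selection, the chain $h_\rho(\bSigma_{k+1}) \leq g_\rho(\bSigma_{k+1} \mid \bSigma_k) \leq g_\rho(\bSigma_k \mid \bSigma_k) = h_\rho(\bSigma_k)$ holds by majorization together with tangency at $\bSigma_k$, since the line search over $\eta \in [0,1]$ fares at least as well as $\eta = 0$. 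When $\bSigma_k \notin \Gamma$, the stationarity equation \eqref{stat_set} fails for \emph{every} $\bTheta \in P_\mathcal{C}(\bSigma_k)$, so $\bv_k \neq \mathbf{0}$ for every admissible selection, and a sufficiently small positive step strictly decreases $g_\rho(\cdot \mid \bSigma_k)$, hence $h_\rho$.

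Compactness follows directly from Lemma \ref{lem:coercive}. The descent property confines the iterates to the sublevel set $\{\bSigma : h_\rho(\bSigma) \leq h_\rho(\bSigma_0)\}$. Coercivity of $h_\rho$, together with the convention inherited from the proof of Lemma \ref{lem:coercive} that $h_\rho = \infty$ on singular matrices, makes this sublevel set both bounded in norm and bounded away from the boundary of the positive definite cone, hence compact inside the interior of that cone.

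For closedness outside $\Gamma$, I would decompose the algorithm map as $A = L \circ S \circ P$. Here $P$ selects a projection $\bTheta \in P_\mathcal{C}(\bSigma)$, $S$ solves the Sylvester equation \eqref{eq:sylvester} to produce the surrogate minimizer $\widehat\bSigma$, and $L$ performs the exact one-dimensional line search on $[0,1]$. The metric projection $P$ onto the closed set $\mathcal{C}$ is a closed set-valued map with compact values, by the standard fact that projections onto closed sets have closed graph. The Sylvester-solve $S$ is single-valued and jointly continuous in $(\bSigma_k, \bTheta)$ on the compact region supplied by the previous step, because $\bSigma_k^{-1}$ has strictly positive eigenvalues and so shares none with $-\bSigma_k^{-1}$, making the Sylvester data uniquely solvable with smooth dependence. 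Closedness of $L$ follows from Berge's maximum theorem applied to the continuous map $\eta \mapsto g_\rho(\bSigma_k + \eta\bv_k \mid \bSigma_k)$ on the compact interval $[0,1]$. Composing a closed set-valued map having compact values with continuous single-valued maps yields a closed map. The main obstacle I anticipate is this last step, namely propagating multi-valued selections through the limit: if $\bSigma_n \to \bSigma \notin \Gamma$ with $\bTheta_n \in A(\bSigma_n)$ and $\bTheta_n \to \bTheta$, one must confirm that $\bTheta$ is itself generated by some valid selection of $P_\mathcal{C}(\bSigma)$, which relies on upper semi-continuity of the projection together with the compositional closure theorem for set-valued maps with compact-valued outer factors.
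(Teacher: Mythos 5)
Your descent and compactness arguments track the paper's proof essentially verbatim, but your closedness argument follows a genuinely different path. The paper exploits the combinatorial structure of the sparsity set $\mathcal{C}$: it is a finite union of linear subspaces indexed by sparsity patterns, so along any sequence $\bSigma_k \to \bSigma$ with selections $\bTheta_k \in P_\mathcal{C}(\bSigma_k)$, some fixed pattern must recur infinitely often. Passing to that subsequence reduces the projection to a continuous linear map, after which limits of the gradient, search direction, and step length go through directly. You instead appeal to general set-valued analysis: the metric projection onto any closed set has closed graph and compact images, the Sylvester-solve and line search are continuous (with Berge's theorem handling the argmin), and closedness of the composition follows from standard composition rules for closed, locally bounded point-to-set maps. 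Both arguments are sound. The paper's is more elementary and tailored to the $\ell_0$ geometry; yours is more robust and would extend unchanged to other closed constraint sets, at the cost of invoking heavier machinery and requiring some care with the composition step (closedness of a composed map is not automatic without the local-boundedness hypothesis you correctly flag by restricting to the compact sublevel set). One thing worth making explicit in your version is that, after extracting a convergent subsequence of projections $\bTheta_n \to \bTheta$, the argument still needs $\bTheta \in P_\mathcal{C}(\bSigma)$ — this is exactly what the closed-graph property of the metric projection delivers, and it is the analogue of the paper's ``fixed sparsity pattern'' subsequence step, so the two proofs converge at precisely this point.
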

\begin{proof}
By definition the algorithm map decreases $h_\rho(\bSigma)$. If $\bSigma$ falls outside $\Gamma$, then any associated search direction $\bv$ be expressed as $\bv = -\bH^{-1} \bu$, where $\bH$ is positive definite and $\bu = \nabla q_\rho(\bSigma \mid \bSigma)$ is nontrivial for any choice of $\bTheta \in P_\mathcal{C}(\bSigma)$. Because
\begin{eqnarray*}d_{\bv} g_\rho(\bSigma \mid \bSigma) = d_{\bv}q_\rho(\bSigma \mid \bSigma)
= - \bu^T \bH^{-1}\bu < 0 ,
\end{eqnarray*} 
it follows that $g_\rho(\bSigma \mid \bSigma)$ can be strictly decreased by moving in the direction $\bv$. Hence, the objective $h_\rho(\bSigma)$ can be strictly decreased. To prove compactness, note that $h_\rho(\bSigma)$ is both continuous and coercive. Hence, its sub-level sets $\{ \bSigma : h_\rho(\bSigma) \le c\}$ are compact. Given that the algorithm decreases $h_\rho(\bSigma)$, all iterates remain within the compact set $\{\bSigma : h_\rho(\bSigma) \le h_\rho(\bSigma_0)\}$. 

To prove closedness, consider a sequence $\bSigma_k$ with limit $\bSigma$ and a corresponding
sequence $\bTheta_k \in A(\bSigma_k)$ with limit 
$\bTheta \not\in \Gamma$. If $f(\bSigma)$ is the loss function, then
$\bu_k=\nabla f(\bSigma_k)+\rho(\bSigma_k-\bTheta_k)$, where 
$\bTheta_k \in  P_{\mathcal{C}}(\bSigma_k)$.
The lack of continuity of the projection operator hinders taking limits. However, since there are only a finite number of sparsity index sets, one of these sets must be chosen infinitely often along the sequence 
$\bTheta_k$. Replace the sequences $\bSigma_k$ and $\bTheta_k$ by the subsequence where this occurs. One can now invoke the continuity of the projection operator and conclude that $\bTheta = \lim_{k \to \infty}\bTheta_k$ exists. It follows that 
$$\bv=\lim_{k \to \infty} \bv_k  = -\bH^{-1}[\nabla f(\bSigma)+\rho(\bSigma-\bTheta)]$$
also exists with $\bTheta \in  P_{\mathcal{C}}(\bSigma)$. Furthermore, 
$\bv \neq  {\bf 0}$ since $\bSigma \not\in \Gamma$. The step-length
sequence $\eta_k$ also has a limit $\eta$ defined by
$$\eta = \lim_{k \to \infty}\frac{\lVert \bTheta_k-\bSigma_k \rVert_2}
{\lVert \bH_k^{-1}\bu_k \rVert_2}
= \frac{\lVert\bTheta-\bSigma\rVert_2}{\lVert\bv\rVert_2}.$$
It remains to prove that $\bTheta = \bSigma + \eta \bv$ is optimal.
Fortunately, this follows by taking limits in the inequality
$g(\bSigma_k+\eta_k \bv_k) \le g(\bSigma_k+\mu \bv_k)$ valid for all $\mu \in [0,1]$.
%
\end{proof}

Now we are ready to prove Theorem \ref{thm:convergence} by a direct application of Zangwill's theorem.
\begin{proof}
The sub-level set $S_{h_\rho}(\bSigma_0)=\{\bSigma: h_\rho(\bSigma) \le h_\rho(\bSigma_0)\}$ is compact, and by Lemmas \ref{lem:coercive} and \ref{lem:zangwill}, all iterates $\bSigma_{k+1} \in A(\bSigma_k)$ lie in $S_{h_\rho}(\bSigma_0)$. These lemmas further show that 
a) $\bSigma_k \succ {\bf 0}$ for every $k$, b) $h_\rho(\bSigma_k)$ is continuous,
c) $h_\rho(\bTheta) \le h_\rho(\bSigma)$ for all $\bTheta \in A(\bSigma)$, and d)
equality is strict here when $\bSigma \notin \Gamma$. Furthermore,
the algorithm map $A(\bSigma)$ is closed outside $\Gamma$, the
set of stationary points. 
Therefore, Theorem \ref{thm:zangwill} applies, and every convergent subsequence of $\bSigma_k$ is a stationary point.
\end{proof}

\subsection*{Additional Simulation Details}
\textcolor{black}{
The experimental design in the first set of simulations are a direct reproduction of those in \citep{bien2011}. The analogous results presented in term of receiver operating characteristic curves appear below in Figure \ref{fig:roc}. Any simulated datasets that fail to produce a positive definite ground truth covariance matrix are re-simulated. Next, all methods are seeded and run on the same synthetic datasets with matched relative tolerance. In all results, the penalty parameter $\lambda$ for competing methods and the sparsity level $k$ for our proposed method are selected via $5$-fold cross validation with respect to Frobenius loss over a vector of $40$ possible values, calibrated so that best values do not occur on either boundary of the vector. This follows the recommendation in the implementations of those methods in the R packages \texttt{CVTuningCov} and \texttt{PDSCE}. We remark that cross-validation with respect to entropy loss was more favorable to our proposed method, though reported results in Tables 1---3 are cross-validated under Frobenius loss to offer a conservative comparison against peer methods.} The initial value of the parameter $\rho$ is set to $0.1$ in all cases considered and is not considered a tuning parameter.

\begin{figure}[htbp]
\centering\includegraphics[width = .32\textwidth]{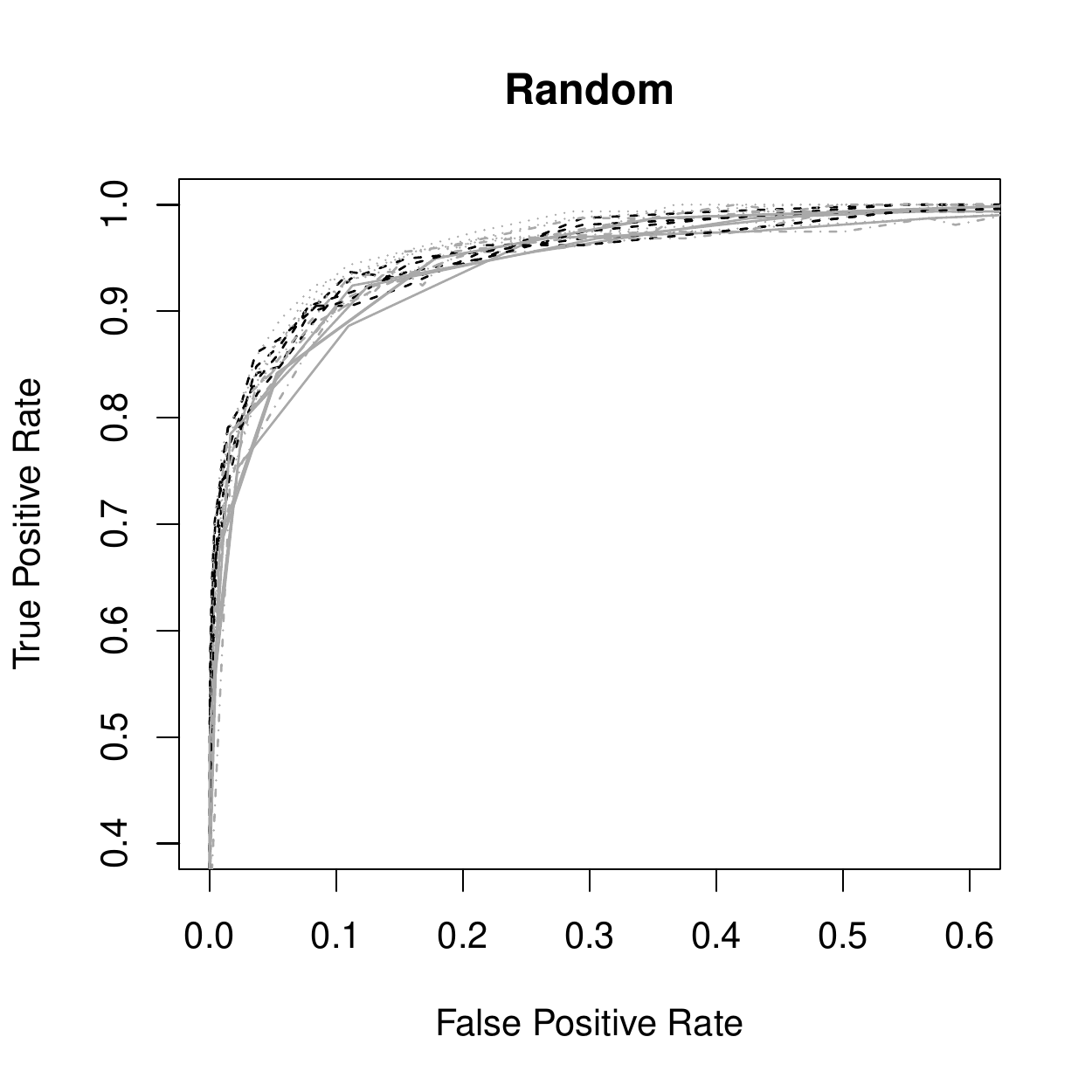}
\includegraphics[width = .32\textwidth]{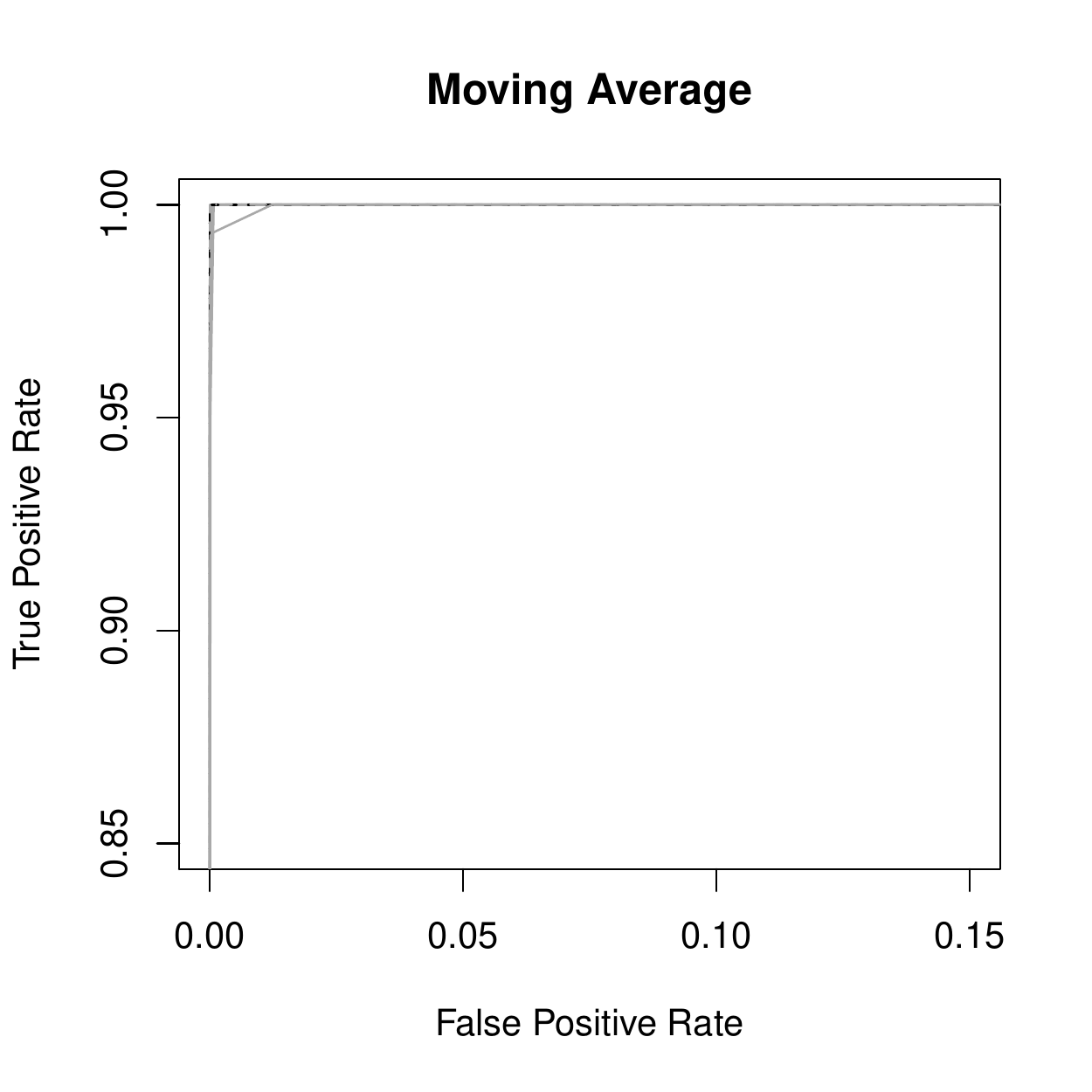} 
\includegraphics[width = .32\textwidth]{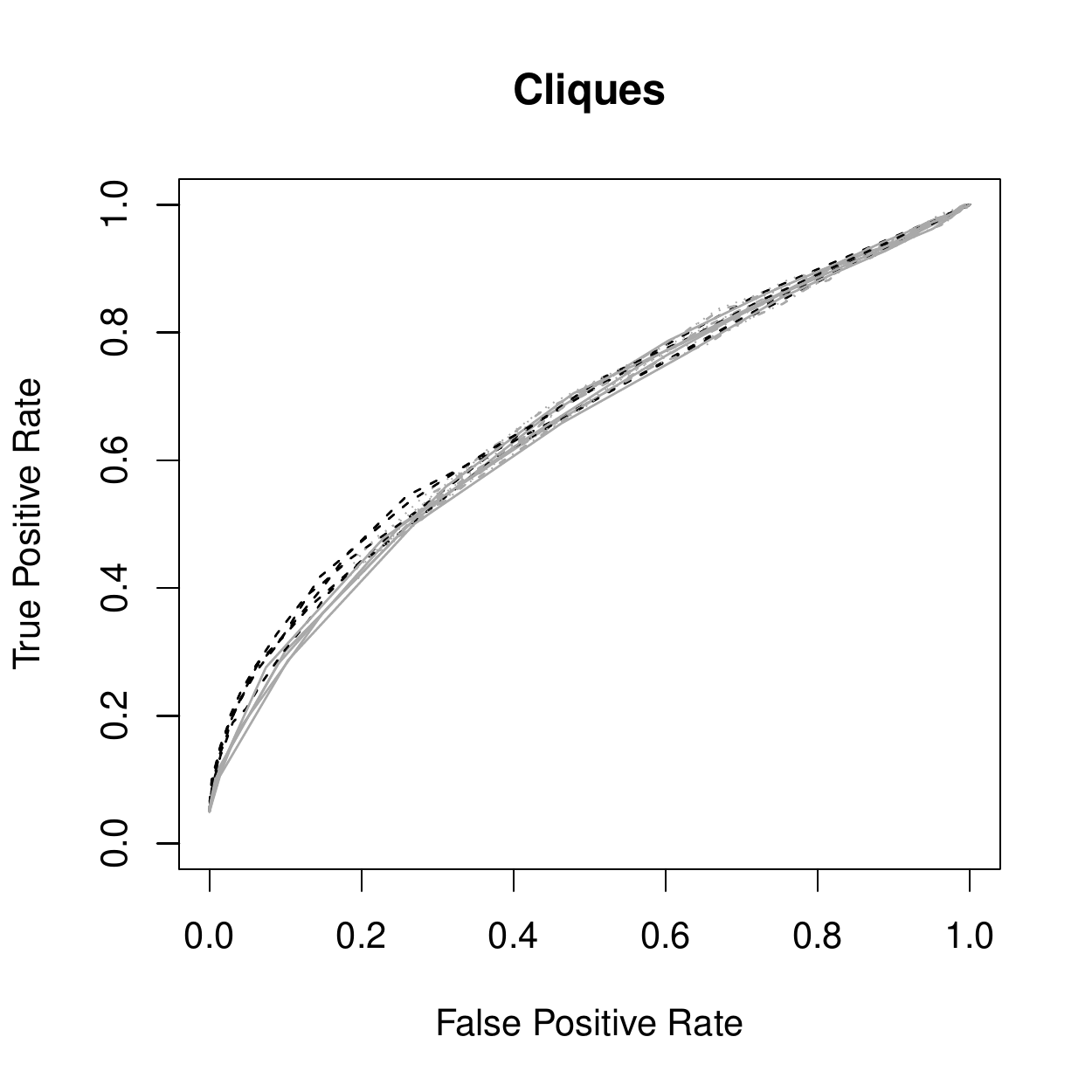}
\caption{\textcolor{black}{Receiver operating characteristic curves corresponding to the simulation study and display conventions of Figure 1.}}\label{fig:roc}
\end{figure}

\bibliographystyle{plainnat}
\bibliography{covariance}

\end{document}